\newtheorem{theorem}{Theorem}
\newtheorem{lemma}{Lemma}
\newtheorem{assumption}{Assumption}
\newtheorem{definition}{Definition}
\newtheorem{remark}{Remark}
\begin{document}

\title{Optimal Battery Placement in Power Grid}


\author{Ruotong Sun$^{1}$, Ermin Wei$^{2}$ and Lihui Yi$^{3}$%
\thanks{$^{1}$Ruotong Sun is with Industrial Engineering and Management Sciences Department, Northwestern University, Evanston, IL 60208, USA.}
\thanks{$^{2}$Ermin Wei is with Electrical and Computer Engineering Department and Industrial Engineering and Management Sciences Department, Northwestern University, Evanston, IL 60208, USA.}
\thanks{$^{3}$Lihui Yi is with the Department of Electrical and Computer Engineering, Northwestern University, Evanston, IL 60208, USA.}}




\maketitle

\begin{abstract}
We study the optimal placement of an unlimited-capacity battery in power grids under a centralized market model, where the independent system operator (ISO) aims to minimize total generation costs through load shifting. The optimal battery placement is not well understood by the existing literature, especially regarding the influence of network topology on minimizing generation costs. Our work starts with decomposing the Mixed-Integer Linear Programming (MILP) problem into a series of Linear Programming (LP) formulations. For power grids with sufficiently large generation capacity or tree topologies, we derive analytical cost expressions demonstrating that, under reasonable assumptions, the weighted degree is the only topological factor for optimal battery placement. We also discuss the minor impact of higher-order topological conditions on tree-topology networks. To find the localized nature of a single battery's impact, we establish that the relative cost-saving benefit of a single battery decreases as the network scales. Furthermore, we design a low-complexity algorithm for weakly-cyclic networks. Numerical experiments show that our algorithm is not only approximately 100 times faster than commercial solvers but also maintains high accuracy even when some theoretical assumptions are relaxed.


\end{abstract}

\begin{IEEEkeywords}
DC power flow, optimal battery placement, network topology, network flow.
\end{IEEEkeywords}

\section{Introduction}
\IEEEPARstart{I}{n} the past decades, distributed Generation (DG) emerges as a decentralized energy solution, with Distributed Energy Systems (DES) gaining widespread global support \cite{nadeem2023distributed}.  
A key challenge in the economic operation of DES is managing the temporal mismatch between electricity demand and the availability of low-cost generation, as system-wide generation costs often fluctuate throughout the daily cycle \cite{reichelstein2015time}.


Battery Energy Storage Systems (BESS) have emerged as a pivotal technology to mitigate this issue by performing temporal energy arbitrage: storing energy when it is inexpensive and discharging it during high-cost periods. This load-shifting capability can significantly reduce reliance on costly generation and enhance overall economic efficiency \cite{das2018overview, helal2018optimal}.

Existing literature investigates the reliability and efficiency of BESS from diverse perspectives. To minimize total power losses on transmission lines, \cite{Tang17} developed a continuous tree model to determine the optimal placement of storage within radial networks. \cite{Qin19, qin2016submodularity} incorporated capital and installation costs through a discrete optimization framework, deriving conditions under which the value function of storage placement is submodular. Further, market interactions involving BESS have been explored. For instance, \cite{QinLi19} proposed a storage investment game to model the interactions between system operators and storage investors. Research has also examined optimal scheduling of BESS for phase balancing \cite{helal2018optimal, sun2015distributed}. However, they fails to specify battery location \cite{pinthurat2023techniques}, a critical factor in practical implementation.

The economic value of a BESS, however, is not realized in a vacuum; it is fundamentally constrained by the physical laws and topology of the power grid. A battery's ability to perform cost-saving arbitrage is limited by network congestion even in the case of no transimission loss and arbitrarily large battery capacity. Its placement is therefore a critical factor in maximizing its system-wide benefits.

Many studies have investigated numerical methods for optimizing energy storage placement and sizing across various applications \cite{wogrin2014optimizing, pandvzic2014near, sardi2017multiple, ahlawat2023optimal, ali2024optimization, ghaffari2022optimal}. Specifically, \cite{sardi2017multiple} examined the benefits of energy storage within distribution systems, using numerical search methods to identify optimal placement, sizing, and operational characteristics. Aiming to optimize multiple objectives, including energy costs, \cite{ali2024optimization} proposed a large-scale multi-objective evolutionary algorithm. 

However, theoretical investigations on optimal battery placement problems remain limited, as they present substantial theoretical challenges \cite{wu2023distributed}: for each battery placement locations and capacities, we should also determine the charging and discharging profile of batteries, involving multiple dimensions of states and decision variables. Under deterministic demand, \cite{thrampoulidis2015optimal} is among the first to address the optimal placement problem theoretically, formulating it as an optimal storage placement problem for load shifting. However, the applicability of their results is limited to systems where generators are connected to the network via a single link. Bose et al. \cite{Bose14} linked the locational marginal value of storage to the locational marginal price at each bus under stochastic demand under the assumption of extremely small storage capacities, which limits its applicability. More later, Wu et al. \cite{wu2023distributed} analyzed a star network with a central generator under stochastic demand, but their findings are constrained by the network topology, making it difficult to generalize the results to other interconnected network structures. Recently, Han et al. \cite{han2024regularized} addressed stochastic demand by employing a regularized Mixed-Integer Programming model within a two-stage stochastic programming framework. Their objective was to minimize total load shedding and excess grid power. This choice of objective, while crucial to their analysis, represents a different formulation of the operational problem, focusing on physical balance instead of cost. Consequently, their results do not apply directly to scenarios focused on minimizing generation costs. Overall, the influence of network topology on optimal battery placement to reduce generation cost is an essential yet underexplored consideration.

In this study, we consider a regulated and centralized market, where the ISO has full knowledge of the deterministic demands and generation costs, and schedules the electricity supplied by each generator. We explore the optimal placement of batteries within the power grid from a theoretical perspective, utilizing the battery's load-shifting capability to reduce total generation costs. Specifically, this study aims to answer the following questions: Given a single available battery with infinite capacity, where is the optimal placement in a lossless power system? In addition, how does the topology of the power grid impact optimal placement? 

Our main contributions include the following: 
\begin{itemize} 
\item We decompose the original MILP problem into a series of LP subproblems.
\item We derive an analytical form for the operational costs under specific settings and propose a low-complexity algorithm to efficiently solve for these costs in weakly-cyclic power grids.
\item Through theoretical analysis, we provide insights into the impact of network topology on optimal battery placement. 
\end{itemize}

In this paper, we use the notation $\mathbb{R}^n_{\geq 0}$ and $\mathbb{R}^n_{> 0}$ to represent $n$-dimensional vectors with non-negative and strictly positive components, respectively. Moreover, $\mathbbm{1}$ denotes the vector with all components equal to 1, dimensioned appropriately to the context of use. 

The remainder is organized as follows. We introduce our model and mathematically formulate the optimal battery placement problem and simplify it in Section \ref{sec: model}. We then present our main theoretical results in Section \ref{sec: theo analysis}. Based on this analysis, Section \ref{algorithm} proposes an efficient algorithm for weakly-cyclic topologies. The performance of this algorithm is evaluated through extensive numerical experiments in Section \ref{ne}. Finally, Section \ref{conclusion} summarizes our findings and key insights. Due to page limitations, detailed proofs and supplementary examples are deferred to the appendix.


\section{Model}\label{sec: model}


\subsection{DC Power flow}
In this section, we introduce our power flow model for a general smart grid. Consider a connected power network $G=(N,E)$ with $n$ buses and $m$ transmission lines. Similar to \cite{sun2015distributed}, each bus has a controllable generator and a load. The load profiles often exhibit cyclical behavior \cite{QinLi19}, so it is sufficient to consider a finite horizon in our model and denote the entire period by $T$. 

We use $\mathcal{T}=\{1, 2, \dots, T\}$ to denote the set of time periods. For each time $t$ in $\mathcal{T}$, the generation profile is represented by the vector $g_t \in \mathbb{R}^n$, and the load profile (or demand) is denoted by the vector $d_t \in \mathbb{R}^n_{\ge 0}$. The generation cost is modeled as a linear function \cite{Bose14}, where the cost coefficient is given by the vector $c_t \in \mathbb{R}^n_{> 0}$. In this study, as we focus on a power system utilizing a single energy source within a localized region, we introduce a homogeneous cost assumption. 
\begin{assumption}[Homogeneous cost]\label{HomoCost}
We set the cost (coefficient) to be homogeneous across the buses within the same time period but varies between different time periods, i.e., $c_t=c(t)\mathbbm{1}$, where $c(t)\in \mathbb{R}_{>0}$ and $c(t_1) \neq c(t_2) \footnotemark, \forall\ t_1 \neq t_2$.
\end{assumption}
\footnotetext{Our main results (cf. Theorem \ref{optimal charging profile}, \ref{thm2}) rely on an ordering of costs across different time periods, which simplifies the discussion. Should the situation arise where two or more time periods have identical cost, our results remain valid by simply resolving this equivalence. This can be achieved through an arbitrary tie-breaking mechanism.}
Each generator is subject to a maximum generation capacity at any time $t$, denoted by the vector $\tilde{g}_t \in \mathbb{R}^n$. 



Moreover, this paper adopts the DC power flow approximation \cite{Yuanzhang}. We represent the power flow equation as $p_t = B\theta_t, \ t=1, \dots, T,$ where $p_t \in \mathbb{R}^n$ denotes the net power injection, $B \in \mathbb{R}^{n \times n}$ is the admittance matrix, and $\theta_t \in \mathbb{R}^n$ is the voltage phase angle at time $t$. 
A practical scenario is considered in which each line has a maximum capacity. We concatenate these limits into a vector $f \in \mathbb{R}^m$. The power flowing through the $m$ lines is constrained by:
\begin{align*}
    -f \le A\theta_t \le f, \quad t=1,...,T,
\end{align*}
where $A\in \mathbb{R}^{m \times n}$ is the incidence matrix weighted by lines' susceptance. We refer the reader to \cite{Bose14} for the detailed description of matrix $A$ and $B$. 



\subsection{Battery Dynamic}
Our goal is to find the optimal placement of a battery within a power grid. For notational convenience, we let $u_t \in \mathbb{R}^n$ represent the battery control profile at each time $t$ across all the buses. This vector essentially captures the energy charged to or discharged from the battery across all buses at any given time $t$. In this paper we use the subscript $i,t$ to indicate the corresponding value at bus $i$ and time $t$. Considering a battery at bus $b$, note that $u_{b,t} > 0$ represents that the battery discharges and behaves as a generator, while $u_{b,t} < 0$ represents the battery charges and behaves as a load at time $t$. The state of charge (SoC) at any time $t$ is $x_{t} \in \mathbb{R}^n_{\geq 0}$. We assume a stylized dynamic process \cite{qin2016submodularity} as follows,
\begin{align*}
    x_{t+1} = x_{t} - u_{t}, \quad t=1,...,T-1,
\end{align*}
The initial SoC is assumed to be $x_{1}=z\in \mathbb{R}^n_{\ge 0}$ which is a design parameter. To preserve the cyclic behavior, we require $x_{T+1}=x_{1}$.
The constraints with respect to the battery dynamic are assembled as follows. 
\begin{align}\label{2}
    \begin{dcases}
        z\otimes\mathbbm{1}+UH \ge 0,\\
        U\mathbbm{1} = 0,
    \end{dcases}
\end{align}
where $\otimes$ is the outer product between two vectors, $U = [u_{1}, u_{2}, ..., u_{T}]\in \mathbb{R}^{n\times T}$, $\mathbbm{1}$ has $T$ dimension, and $H \in \mathbb{R}^{T \times T}$ is a upper triangular matrix constructed as follows,
\begin{align*}
    H_{ij} = -1,\quad \forall\ i \le j.
\end{align*}
Note that the constraint $z\in \mathbb{R}^n_{\ge 0}$ is embedded in \eqref{2}.
These constraints imply that at each time at all locations, the battery level is nonnegative, while also guaranteeing that the net change in the battery level over the time horizon $T$ is zero, thereby preserving the cyclic behavior of the system.

Since the battery is the most expensive component in a microgrid \cite{wongdet2023optimal} especially their installation, it is more cost-effective to install a battery with a sufficiently large storage capacity at a single bus from the outset, rather than deploying multiple smaller batteries across many buses or dealing with frequent replacements, which can significantly increase operating costs. In this paper, we consider the scenario where a single battery with a large enough capacity is installed within the network.

We introduce a binary vector $\sigma \in \{0,1\}^n$, with component $i$ represents if the battery is placed at bus $i$, i.e., 
\begin{align*}
    \begin{dcases}
        \sigma_{i} = 1, &\text{if battery is placed at bus }i \\
        \sigma_{i} = 0, &\text{otherwise}.
    \end{dcases}
\end{align*}

Since only one battery is introduced, it naturally follows that $\mathbbm{1}'\sigma = 1$. Suppose that the battery is placed at bus $b$, only the component corresponding to the $b$-th bus in $u_t$ can be nonzero, which means $u_{i,t} = 0$ for all $i \neq b$. Hence, we construct a series of ``if-then" constraints,
\begin{align}\label{eq1}
    \begin{dcases}
        u_{i,t} \leq M\sigma_{i}, &\forall i\in N, t\in \mathcal{T} \\
        u_{i,t} \geq -M\sigma_{i}, &\forall i\in N, t\in \mathcal{T},
    \end{dcases}
\end{align}
where $M$ is a large enough constant. These constraints ensure that $u_{i,t}$ can take nonzero values if and only if $\sigma_i = 1$, which indicates that a battery is installed at the $i$-th bus.

\subsection{Optimal Battery Placement}
The optimal battery placement can be modeled as a large-scale MILP with the objective minimizing the total generation cost. The installation cost is assumed to be constant and is therefore excluded from the formulation. 
\begin{align*}\tag{$P0$}\label{P0}
    \min_{g_t, \theta_t, U, z, \sigma} \quad 
    & \sum_{t=1}^{T} c(t)\mathbbm{1}'g_t \\
    s.t. \quad
    & g_t + u_t - d_t = B\theta_t, && \forall t\in \mathcal{T} \\
    & -f \le A\theta_t \le f,        && \forall t\in \mathcal{T} \\
    & 0 \le g_t \le \tilde{g}_t,   && \forall t\in \mathcal{T} \\
    & -\sigma M \leq u_t \leq M\sigma, && \forall t\in \mathcal{T} \\
    & z\otimes\mathbbm{1}+UH \ge 0, &&\\
    & U\mathbbm{1} = 0,            && \\
    & \mathbbm{1}'\sigma = 1,      && \\
    & \sigma \in \{0,1\}^n.        &&
\end{align*}

After attaining the optimal solution, the position of non-zero element in $\sigma$ represents the optimal location for the battery. 
\subsection{Multi-period economic dispatch}
Although the general MILP is NP-hard, we can decompose the problem into $n$ multi-period economic dispatch problems with one for each battery placement choice. The following theorem reformulate \eqref{P0} as a two-stage optimization problem.

\begin{theorem}\label{model simplification}
We can equivalently reformulate \eqref{P0} a two-stage optimization. In the first stage, we determine the optimal battery placement bus
\begin{align*}
    b^* \coloneqq \underset{b=1,...,n}{\arg\min} \quad J(b),
\end{align*}
where $J(b)$ is the optimal operational cost of the second-stage problem
\begin{align*}\tag{$P$}\label{P}
    \min_{g_t, \theta_t, [U]_b} \quad &\sum_{t=1}^{T} c(t)\mathbbm{1}'g_t\\ 
    s.t. \qquad&g_t+u_t-d_t = B\theta_t, && \forall t\in \mathcal{T} &&\lambda_t \in \mathbb{R}^n\\
    &A\theta_t \ge -f, && \forall t\in \mathcal{T} &&\underline{\mu}_t \in \mathbb{R}_{\ge 0}^m\\
    &A\theta_t \le f, && \forall t\in \mathcal{T} &&\overline{\mu}_t \in \mathbb{R}_{\ge 0}^m\\
    &g_t \ge 0, && \forall t\in \mathcal{T} &&\underline{\beta}_t \in \mathbb{R}_{\ge 0}^n\\
    &g_t \le \tilde{g}_t,  && \forall t\in \mathcal{T} &&\overline{\beta}_t \in \mathbb{R}_{\ge 0}^n\\
    &[U]_b\mathbbm{1} = 0. && &&\eta \in \mathbb{R}
\end{align*}
This is a multi-period economic dispatch problem given the battery placement $b$. We use $[\cdot]_b$ to represent the $b$-th row of matrix $U$. Besides, $u_{i,t} = 0$ for all $t$ and $i \neq b$ is taken as the parameter. 

The dual variables in \eqref{P} are introduced to correspond to the constraints of the optimization problem. Specifically, $ \lambda_t \in \mathbb{R}^n $ is the dual variable associated with the power balance constraints and is also known as the Locational Marginal Price (LMP)\cite{schweppe2013spot}, representing the marginal cost of supplying one additional unit of power at each bus at time $ t $. The variables $ \underline{\mu}_t \in \mathbb{R}_{\geq 0}^m $ and $ \overline{\mu}_t \in \mathbb{R}_{\geq 0}^m $ correspond to the line flow limits, while $ \underline{\beta}_t \in \mathbb{R}_{\geq 0}^n $ and $ \overline{\beta}_t \in \mathbb{R}_{\geq 0}^n $ enforce the generation capacity limits. Finally, $ \eta \in \mathbb{R} $ is associated with the cyclic battery operation constraint $ [U]_b\mathbbm{1} = 0 $, ensuring energy balance over the time horizon.
\end{theorem}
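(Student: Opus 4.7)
The plan is to exploit the one-hot structure of the indicator vector $\sigma$. Since $\sigma\in\{0,1\}^n$ together with $\mathbbm{1}'\sigma=1$ admits exactly $n$ feasible values, namely the standard basis vectors $e_1,\ldots,e_n$, the MILP \eqref{P0} decomposes into $n$ parametric subproblems indexed by the battery bus $b$, and the overall optimum is the minimum across these subproblems. The proof amounts to showing that, once $\sigma=e_b$ is substituted, the resulting problem is exactly \eqref{P} (up to the auxiliary variables that can be eliminated), after which optimizing over $b$ yields the first-stage selection rule $b^*=\arg\min_b J(b)$.

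I would carry out the substitution constraint-by-constraint. First, fixing $\sigma=e_b$ in the big-M inequalities $-M\sigma\le u_t\le M\sigma$ forces $u_{i,t}=0$ for all $i\neq b$ and all $t\in\mathcal{T}$, while leaving $u_{b,t}$ effectively free provided $M$ is chosen larger than any bound implied by the generation and demand data (so that the inequalities are never active at $i=b$). Second, the matrix equation $U\mathbbm{1}=0$ then collapses row-by-row: the rows $i\neq b$ are trivially satisfied because they are identically zero, leaving only $[U]_b\mathbbm{1}=0$. Third, I would eliminate the design variable $z$ and the SoC inequality $z\otimes\mathbbm{1}+UH\ge 0$ by a feasibility argument in both directions: given any $[U]_b$ with $[U]_b\mathbbm{1}=0$, setting $z_i=0$ for $i\neq b$ and $z_b=\max_{1\le t\le T}\sum_{s=1}^{t}u_{b,s}$ yields a nonnegative $z$ making every SoC inequality hold; conversely any $(U,z)$ feasible in \eqref{P0} obviously satisfies the remaining constraints of \eqref{P}. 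The remaining constraints (power balance, line limits, and generation caps) carry over verbatim, matching \eqref{P} exactly. Taking the optimal value $J(b)$ of the resulting LP for each $b$ and minimizing gives the two-stage reformulation.

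The main obstacle I anticipate is justifying the elimination of $z$ and the SoC constraint, since this is the only step that changes the feasible region rather than merely relabeling it. This argument hinges crucially on the unlimited-capacity assumption stated in the paragraph preceding the theorem: because $z$ has no upper bound, the SoC inequality is never binding and can be dropped without affecting the optimal cost. If capacity were finite, the SoC inequality would have to be retained in \eqref{P} and the decomposition would no longer be lossless. The remaining steps (enumeration of $\sigma$, collapsing the big-M block, and reducing $U\mathbbm{1}=0$ to its $b$-th row) are routine substitutions once the elimination of $z$ is in place.
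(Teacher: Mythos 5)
Your proposal is correct and follows essentially the same route as the paper: enumerate the $n$ feasible one-hot values of $\sigma$, collapse the big-$M$ block and $U\mathbbm{1}=0$ to the $b$-th row, and discard the SoC constraint by shifting the initial charge upward (your explicit choice $z_b=\max_{1\le t\le T}\sum_{s=1}^{t}u_{b,s}$, which is nonnegative since $[U]_b\mathbbm{1}=0$ makes the $t=T$ partial sum zero, is exactly the "shift the initial SoC higher" argument the paper invokes via the unlimited-capacity assumption). No gaps.
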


The theorem follows from the fact that our battery has sufficient capacity, allowing the initial SoC to be shifted higher to ensure that $ z\otimes\mathbbm{1}+UH \ge 0 $ is always satisfied, while keeping the battery control and generation profile unchanged. Consequently, this constraint is redundant in our model.

In the remainder of the paper,  we use $b$ to denote the battery placement bus.

\subsection{Maxflow Decomposition}
Although the generation profiles and battery controls are temporally coupled in \eqref{P}, further decomposition can be made due to the homogeneous cost coefficient. In this part, we introduce the \textit{Maxflow decomposition}, including \textit{Maxflow problems}, followed by two key theorems regarding the optimal charging profile and costs. Both theorems highlight the relationship between Maxflow problems and the optimal solution of \eqref{P}, demonstrating why Maxflow problems are crucial to our analysis and how their objective values are embedded within the optimal solution of \eqref{P}.

Let $W$ be the feasible region for Maxflow problems.
$$
W = \left\{ (g_t, u_t, \theta_t) \left| 
\begin{aligned}
    &g_t + u_t - d_t = B\theta_t, \\
    &-f \leq A\theta_t \leq f, \\
    &0 \leq g_t \leq \tilde{g}_t.
\end{aligned}
\right.\right\}
$$
We begin with the definition of Maxflow problems, which includes the \textit{maximum inflow} and \textit{outflow} determined for each time period $t\in \mathcal{T}$ and any battery placement bus $b\in N$.
\begin{definition}[Maximum inflow]
Maximum inflow is the maximum power flow into the battery placement bus at a period, which is:
\begin{align*}\tag{$MIF$}\label{MIF}
    F_{b,t}^{\text{in}}\coloneqq \max_{g_t, \theta_t, u_t} \quad &\sum_{i\neq b}(g_{i,t} - d_{i,t})\\
    s.t.\quad&(g_t, \theta_t, u_t) \in W.
\end{align*}
\end{definition}
\begin{definition}[Maximum outflow]
The maximum outflow is the maximum power flow out of battery placement bus at a period, which is:
\begin{align*}\tag{$MOF$}\label{MOF}
    F_{b,t}^{\text{out}}\coloneqq \max_{g_t, \theta_t, u_t} \quad &\sum_{i\neq b}(d_{i,t} - g_{i,t})\\
    s.t.\quad&(g_t, \theta_t, u_t) \in W.
\end{align*}
\end{definition}

Collectively, we refer to \eqref{MIF} and \eqref{MOF} as {\it Maxflow problems}, which are similar to, but distinct from, the usual minimum-cost flow problem \cite{ahuja1988network}, a classic problem in operations research. The latter typically includes only the flow conservation constraint, i.e., Kirchhoff's first law, to ensure that the total flow into a node is equal to the total flow out. In contrast, our Maxflow problems here incorporate both Kirchhoff’s first and second laws. While Kirchhoff’s first law ensures flow conservation, the second law, which states that the voltages around any closed loop sum to zero, is also reflected in the first constraint of $W$.

To ensure that \eqref{MIF} and \eqref{MOF} have feasible solutions and are well-defined, we make the following assumption on $\tilde{g}_t$, which ensures that every generator has enough generation capacity to meet their local demand.
\begin{assumption}[Sufficient generation capacity]\label{Sufficient generation assumption}
The generation capacity of each bus exceeds its corresponding demand at all times, that is, $\tilde{g}_{i,t} > d_{i,t}, \forall i \in N, t\in \mathcal{T}$.
\end{assumption}

Based on this assumption, the next lemma shows that our Maxflow problems are well defined, which means that they are feasible and the inflow and outflow are positive (consistent with their physical interpretation). 
We have Assumptions \ref{HomoCost} and \ref{Sufficient generation assumption} hold for the rest of the paper.

\begin{lemma}\label{positive maxflow}
    Both $F_{b,t}^{\text{in}}$ and $F_{b,t}^{\text{out}}$ are strictly positive.
\end{lemma}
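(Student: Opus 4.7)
The plan is to construct explicit feasible perturbations that achieve strictly positive objective values, so that the optima $F_{b,t}^{\text{in}}$ and $F_{b,t}^{\text{out}}$ are bounded below by a positive quantity. First I would introduce the trivial ``local-service'' feasible point: set $g_{i,t}=d_{i,t}$ for every bus $i$, $u_{b,t}=0$, and $\theta_t=0$. By Assumption \ref{Sufficient generation assumption}, each $g_{i,t}$ sits strictly below the cap $\tilde g_{i,t}$, and since $A\theta_t=0$ the line flows are strictly interior to $[-f,f]$ (under the standing proviso $f>0$ componentwise). This baseline therefore admits small perturbations in either direction while staying in $W$.

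For $F_{b,t}^{\text{in}}$, I would pick any bus $i_0 \neq b$ (available since $n\geq 2$) and create a small surplus at $i_0$ to be absorbed by the battery at $b$: set $g_{i_0,t}=d_{i_0,t}+\epsilon$, keep the other generations at their local demands, and set $u_{b,t}=-\epsilon$. The resulting net-injection vector $\epsilon(e_{i_0}-e_b)$ is orthogonal to $\mathbbm{1}$ and therefore lies in the range of the weighted Laplacian $B$ (since the graph is connected), so some $\theta_t$ solves $B\theta_t=\epsilon(e_{i_0}-e_b)$. Continuity of $A\theta_t$ and of $g_{i_0,t}$ in $\epsilon$ guarantees that for all sufficiently small $\epsilon>0$ every capacity and line-flow constraint continues to hold. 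The objective $\sum_{i\neq b}(g_{i,t}-d_{i,t})$ evaluates to exactly $\epsilon$, so $F_{b,t}^{\text{in}}\geq\epsilon>0$.

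The argument for $F_{b,t}^{\text{out}}$ is the mirror image: pick $i_0\neq b$ with $d_{i_0,t}>0$ and set $g_{i_0,t}=d_{i_0,t}-\epsilon$, $u_{b,t}=+\epsilon$, leaving all other generations at $d_{i,t}$. For $\epsilon<d_{i_0,t}$ the non-negativity $g_{i_0,t}\geq 0$ holds, the same continuity argument preserves the other constraints, and the outflow objective equals $\epsilon>0$. The main subtlety I anticipate is that strictly positive outflow inherently requires some demand off bus $b$: since $g_{i,t}\geq 0$, the outflow objective is upper-bounded by $\sum_{i\neq b}d_{i,t}$, so if all load sat at $b$ the claim would fail. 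I would treat this as an implicit non-degeneracy assumption of the paper (the opposite case, where the battery location already coincides with the only load, is operationally trivial); given that, the rest of the proof is a direct interior-point continuity argument relying on Assumption \ref{Sufficient generation assumption} and the connectivity of $G$.
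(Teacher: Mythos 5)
Your construction is correct and is essentially the paper's own argument: Assumption \ref{Sufficient generation assumption} gives strict slack in the generation caps, connectivity of $G$ makes $B\theta_t=p_t$ solvable for any balanced injection, and a sufficiently small perturbation of the local-service point $g_t=d_t$, $u_t=0$, $\theta_t=0$ keeps all line flows strictly inside $[-f,f]$ while driving the objective to $\epsilon>0$. Your side observation that strict positivity of $F_{b,t}^{\text{out}}$ additionally requires some strictly positive demand off bus $b$ (which follows once Assumption \ref{Congestion assumption} is in force, since $0<f_{ij}<d_{i,t}$, but not from $d_t\in\mathbb{R}^n_{\ge 0}$ alone) is a genuine edge case and is handled appropriately.
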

It is important to distinguish between the Maxflow and the charging profile $u_{b,t}$, as the latter also incorporates power generation and consumption at the battery placement bus $b$. The relationship between them is given by $ u_{b,t} \in [-F_{b,t}^{\text{in}} - \tilde{g}_{b,t} + d_{b,t}, F_{b,t}^{\text{out}} + d_{b,t}] $. Lemma \ref{positive maxflow} guarantees that this interval is always well-defined and non-empty. Building on this result, we derive the following theorem, which establishes that the charging profile in \eqref{P} can be determined by solving the Maxflow problems.

\begin{theorem}\label{optimal charging profile}
Let $t^{(i)}$ be the $i$-th lowest-cost period. In the optimal solution, set $k = \max\{i \in \mathcal{T}|u_{b,t^{(i)}} < F_{b,t^{(i)}}^{\text{out}}+d_{b,t^{(i)}}\}$, where $t^{(k)}$ is called the adjustment period.
Then the optimal charging profile can only be:
\begin{align*}
    \begin{dcases}
        u_{b,t} = -F_{b,t}^{\text{in}}-\tilde{g}_{b,t} + d_{b,t}, \quad\text{if } t = t^{(1)}, \cdots, t^{(k-1)}\\
        u_{b,t^{(k)}}\in \left[-F_{b,t^{(k)}}^{\text{in}}-\tilde{g}_{b,t^{(k)}} + d_{b,t^{(k)}},F_{b,t^{(k)}}^{\text{out}} + d_{b,t^{(k)}}\right),\\
        u_{b,t} = F_{b,t}^{\text{out}} + d_{b,t},\quad \text{if } t = t^{(k+1)}, \cdots, t^{(T)}
    \end{dcases}
\end{align*}
where $u_{b,t^{(k)}}$ is determined by $[U]_b\mathbbm{1} = 0$.
\end{theorem}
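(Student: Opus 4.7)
The plan is to reduce the multi-period economic dispatch \eqref{P} to a single scalar linear program in the battery injection trajectory $(u_{b,t})_{t \in \mathcal{T}}$ and then solve it via a standard exchange argument. I would begin by exploiting the Laplacian structure of the admittance matrix $B$: since $\mathbbm{1}'B = 0$, summing the power balance over all buses (using that only $u_{b,t}$ is nonzero) yields $\mathbbm{1}'g_t = \mathbbm{1}'d_t - u_{b,t}$. Under the homogeneous cost Assumption \ref{HomoCost} the objective then becomes
\begin{equation*}
\sum_{t=1}^T c(t)\mathbbm{1}'g_t \;=\; \sum_{t=1}^T c(t)\mathbbm{1}'d_t \;-\; \sum_{t=1}^T c(t)\, u_{b,t},
\end{equation*}
so that minimizing cost is equivalent to maximizing $\sum_t c(t) u_{b,t}$.

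Next I would argue that, after marginalizing over $g_t$ and $\theta_t$, the feasible values of $u_{b,t}$ at each time form the interval $[l_t, h_t] := [-F_{b,t}^{\text{in}} - \tilde{g}_{b,t} + d_{b,t},\, F_{b,t}^{\text{out}} + d_{b,t}]$, and that these choices decouple across $t$. Writing the balance at bus $b$ as $g_{b,t}+u_{b,t}-d_{b,t} = \sum_{i\ne b}(d_{i,t}-g_{i,t})$, the upper bound $h_t$ is obtained by setting $g_{b,t}=0$ and invoking \eqref{MOF}, while the lower bound $l_t$ follows by setting $g_{b,t}=\tilde{g}_{b,t}$ and invoking \eqref{MIF}; convexity of $W$ together with Lemma \ref{positive maxflow} fills the interval. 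Since the only remaining temporal coupling in \eqref{P} is the cyclic constraint $[U]_b\mathbbm{1}=0$ (the SoC constraint having been eliminated by Theorem \ref{model simplification}), the problem reduces to the scalar LP
\begin{equation*}
\max_{u_{b,\cdot}} \;\sum_{t=1}^T c(t)\, u_{b,t} \quad \text{s.t.}\quad \sum_{t=1}^T u_{b,t} = 0,\ \ u_{b,t}\in [l_t, h_t].
\end{equation*}

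To solve this LP I would use an exchange argument: if any candidate optimum admitted two periods $t^{(i)}, t^{(j)}$ with $i<j$ (hence $c(t^{(i)}) < c(t^{(j)})$) satisfying $u_{b,t^{(i)}} > l_{t^{(i)}}$ and $u_{b,t^{(j)}} < h_{t^{(j)}}$ simultaneously, then transferring $\varepsilon>0$ mass from $t^{(i)}$ to $t^{(j)}$ would strictly improve the objective by $(c(t^{(j)})-c(t^{(i)}))\varepsilon$ while preserving feasibility, a contradiction. Ordering the periods by cost and letting $k$ be the largest index with $u_{b,t^{(k)}} < h_{t^{(k)}}$ then forces $u_{b,t^{(i)}} = l_{t^{(i)}}$ for $i<k$, $u_{b,t^{(i)}} = h_{t^{(i)}}$ for $i>k$, and leaves $u_{b,t^{(k)}}\in [l_{t^{(k)}}, h_{t^{(k)}})$ as the adjustment period, its value uniquely pinned down by $[U]_b\mathbbm{1}=0$. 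The main obstacle I anticipate is the second step — rigorously showing that $u_{b,t}$-feasibility reduces to the advertised interval with no residual temporal coupling — but this rests on the convexity of $W$ and on the reformulation already provided by Theorem \ref{model simplification}.
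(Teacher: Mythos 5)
Your proposal is correct and follows essentially the same route as the paper: the identity $\mathbbm{1}'g_t = \mathbbm{1}'d_t - u_{b,t}$ from $\mathbbm{1}'B=0$ reduces the objective to maximizing $\sum_t c(t)u_{b,t}$, the per-period feasible set of $u_{b,t}$ is exactly the interval $[-F_{b,t}^{\text{in}}-\tilde{g}_{b,t}+d_{b,t},\,F_{b,t}^{\text{out}}+d_{b,t}]$ established via Lemma \ref{positive maxflow} (with $u_{b,t}$ acting as a free slack in the balance at bus $b$ so the endpoints are attainable and convexity fills the interval), and the exchange argument over the cost ordering yields the bang-bang structure with a single adjustment period pinned down by $[U]_b\mathbbm{1}=0$. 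The only step worth making explicit is that $k$ is well defined, i.e., not all periods can sit at the upper bound, since $\sum_t (F_{b,t}^{\text{out}}+d_{b,t})>0$ would contradict $[U]_b\mathbbm{1}=0$.
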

The intuition behind this theorem is straightforward. Since the objective in \eqref{P} is to minimize the total oprational cost, the battery should be charged as much as possible when prices are low, and the stored power should be released to meet the grid demand when prices are high.

Essentially, it states that the optimal charging profile in \eqref{P} follows a bang-bang control. To see this, without loss of generality, we assume that the marginal cost strictly increases over time. Then, according to Theorem \ref{optimal charging profile}, the battery should be charged before the adjustment period and then discharged.

Once the optimal charging profile for each time period is determined, the optimal cost of placing the battery at bus $b$ can be derived analytically, as shown in the next theorem.
\begin{theorem}\label{thm2}
We can use the following lexicographic order to determine the adjustment period, which corresponds to the $k$-th lowest cost period, where
$$
    k=\begin{dcases}
    1 & \text{if } \tilde{g}_{b,t^{(1)}}+F_{b,t^{(1)}}^{\text{in}} \geq \sum_{i=2}^T F_{b,t^{(i)}}^{\text{out}} + \sum_{t} d_{b,t}\\
    \vdots\\
    m & \text{if } \sum_{i=1}^{m-1}\tilde{g}_{b,t^{(i)}}+\sum_{i=1}^{m-1}F_{b,t^{(i)}}^{\text{in}} < \sum_{i=m}^T F_{b,t^{(i)}}^{\text{out}} + \sum_{t} d_{b,t}\\
    & \sum_{i=1}^{m}\tilde{g}_{b,t^{(i)}}+\sum_{i=1}^{m}F_{b,t^{(i)}}^{\text{in}} \geq \sum_{i=m+1}^T F_{b,t^{(i)}}^{\text{out}} + \sum_{t} d_{b,t}\\
    \vdots\\
    T & \text{if } \sum_{i=1}^{T-1}\tilde{g}_{b,t^{(i)}}+\sum_{i=1}^{T-1}F_{b,t^{(i)}}^{\text{in}} < F_{b,t^{(T)}}^{\text{out}} + \sum_{t} d_{b,t}
    \end{dcases}
$$   
The corresponding cost of \eqref{P} will be:
\begin{align}\label{general optimal cost}
    &\sum_{t=1}^T c(t)\mathbbm{1}'g_t
    = \sum_{t=1}^T\sum_{i=1}^n d_{i,t}c(t) - \sum_{t=1}^T\left(F_{b,t}^{\text{in}} + \tilde{g}_{b,t} - d_{b,t}\right) \nonumber\\ 
    &(c(t^{(k)})-c(t))^+ -\sum_{t=1}^T (F_{b,t}^{\text{out}} + d_{b,t}) (c(t^{(k)})-c(t))^-
\end{align}
\end{theorem}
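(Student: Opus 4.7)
\textbf{Proof plan for Theorem \ref{thm2}.} The plan is to build on the bang-bang structure given by Theorem \ref{optimal charging profile} and translate it into (i) a closed-form identification of the adjustment index $k$ and (ii) a closed-form value of the objective.

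First I will use the cyclic constraint $[U]_b\mathbbm{1}=0$ to pin down the charging value at the adjustment period. Writing $t^{(i)}$ for the $i$-th lowest-cost period and substituting the profile from Theorem \ref{optimal charging profile} into $\sum_{t}u_{b,t}=0$ gives
\begin{equation*}
u_{b,t^{(k)}}=\sum_{i<k}\bigl(F^{\text{in}}_{b,t^{(i)}}+\tilde{g}_{b,t^{(i)}}-d_{b,t^{(i)}}\bigr)-\sum_{i>k}\bigl(F^{\text{out}}_{b,t^{(i)}}+d_{b,t^{(i)}}\bigr).
\end{equation*}
Imposing on this expression the two-sided interval $[-F^{\text{in}}_{b,t^{(k)}}-\tilde{g}_{b,t^{(k)}}+d_{b,t^{(k)}},\,F^{\text{out}}_{b,t^{(k)}}+d_{b,t^{(k)}})$ produces exactly the two cumulative inequalities in the statement: the strict upper-bound inequality corresponds to ``the $(k-1)$-prefix of charging capacity is insufficient'' and the lower-bound inequality corresponds to ``the $k$-prefix is sufficient.'' Because the left-hand prefix sums grow and the right-hand sums shrink as the cutoff index grows, there is a unique admissible $k$, and the lexicographic characterization is well-defined.

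Second, I will reduce the objective to a function of $u_{b,t}$. Summing the balance equation $g_t+u_t-d_t=B\theta_t$ over all buses and using $\mathbbm{1}'B=0$ (the admittance matrix has zero column sums), together with the fact that $u_t$ has a single nonzero entry at bus $b$, yields $\mathbbm{1}'g_t=\mathbbm{1}'d_t-u_{b,t}$. Multiplying by $c(t)$ and summing over $t$ gives
\begin{equation*}
\sum_{t}c(t)\mathbbm{1}'g_t=\sum_{t,i}c(t)d_{i,t}-\sum_{t}c(t)u_{b,t}.
\end{equation*}
Substituting the bang-bang values for $i\neq k$ and the balance-derived expression for $u_{b,t^{(k)}}$, then grouping terms by the factor $c(t^{(k)})-c(t^{(i)})$, produces two sums: for $i<k$ the cost difference is positive and multiplies $F^{\text{in}}_{b,t^{(i)}}+\tilde{g}_{b,t^{(i)}}-d_{b,t^{(i)}}$, while for $i>k$ the cost difference is negative and multiplies $F^{\text{out}}_{b,t^{(i)}}+d_{b,t^{(i)}}$. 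Writing these in the unified positive-part/negative-part notation and observing that the $i=k$ contribution vanishes under either convention recovers \eqref{general optimal cost}.

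The main obstacle is the cancellation at the adjustment period: the final formula must be insensitive to the exact (possibly interior) value of $u_{b,t^{(k)}}$. This is precisely what the substitution from the cyclic constraint ensures, since it rewrites the $c(t^{(k)})u_{b,t^{(k)}}$ contribution as a linear combination of cost differences referenced to $c(t^{(k)})$, in the same form used to re-express every other period's contribution. The strict cost ordering from Assumption \ref{HomoCost} is what lets us cleanly partition the three regimes $\{i<k\}$, $\{i=k\}$, $\{i>k\}$; ties are handled by the tie-breaking rule already acknowledged in the footnote.
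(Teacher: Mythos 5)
Your proposal is correct and follows essentially the route the paper takes: it reads off the bang-bang profile from Theorem~\ref{optimal charging profile}, uses the cyclic constraint $[U]_b\mathbbm{1}=0$ together with the admissible interval for $u_{b,t^{(k)}}$ to obtain exactly the two cumulative inequalities defining $k$ (with uniqueness from monotonicity of the prefix sums), and recovers \eqref{general optimal cost} by summing the power-balance equation over buses ($\mathbbm{1}'B=0$) and regrouping around $c(t^{(k)})$ so that the adjustment-period term cancels. No gaps worth flagging.
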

The theorem indicates that the larger maximum inflow (outflow) during lower-cost (higher-cost) periods leads to a smaller $k$, which means that more power can be ``moved" from lower- to higher-cost periods, reducing the operational cost.

We define the weighted degree of any bus $i$ as $\sum_{j\in \mathcal{N}(i)}f_{ij}$. This theorem implies that if $F_{b,t}^{\text{in}}$ and $F_{b,t}^{\text{out}}$ are solely dependent on the weighted degree of $b$, the only topological factor that affects the cost is the weighted degree of the battery bus.

\section{Theoretical Analysis}\label{sec: theo analysis}
In this section, based on Maxflow decomposition, we find that in some settings the problem has an analytical solution. In particular, we can solve \eqref{MIF} and \eqref{MOF} analytically in such cases, where the relation to the topology is also clear.
\subsection{Power Grid without Congestion}
We first consider the scenario in which the power grid does not experience congestion at any time. In this setting, we have the following theorem.
\begin{theorem}\label{small tilde g}
    When the power gird has no congestion at any time, the battery only discharges at the highest cost period, and the optimal cost is uniform in the battery’s location.
\end{theorem}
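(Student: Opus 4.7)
The plan is to use the no-congestion hypothesis to collapse the DC power flow into an effectively single-bus aggregate and then read off the claim directly from Theorems \ref{optimal charging profile} and \ref{thm2}. When the line-flow inequalities $-f\le A\theta_t\le f$ are slack at every $t$, the constraint set $W$ degenerates: the only active pieces are the per-bus caps $0\le g_{i,t}\le \tilde g_{i,t}$ together with the aggregate balance $\sum_i(g_{i,t}+u_{i,t}-d_{i,t})=0$ forced by $\mathbbm{1}'B=0$. Any zero-sum net injection can then be realized by some admissible $\theta_t$, so the Maxflow programs decouple from the network geometry.

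First, I would compute the Maxflows explicitly under this reduction. For $F_{b,t}^{\text{in}}$, driving every non-battery generator to its cap, $g_{i,t}=\tilde g_{i,t}$ for $i\ne b$, is clearly optimal; the free variables $g_{b,t}$ and $u_{b,t}$ absorb whatever residual is needed for the aggregate balance. This yields $F_{b,t}^{\text{in}}=\sum_{i\ne b}(\tilde g_{i,t}-d_{i,t})$. By the symmetric argument with $g_{i,t}=0$ for $i\ne b$, I obtain $F_{b,t}^{\text{out}}=\sum_{i\ne b} d_{i,t}$.

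Substituting these into Theorem \ref{thm2}, the two compound quantities driving the cost formula simplify to
$$F_{b,t}^{\text{in}}+\tilde g_{b,t}-d_{b,t}=\sum_j(\tilde g_{j,t}-d_{j,t}),\qquad F_{b,t}^{\text{out}}+d_{b,t}=\sum_j d_{j,t},$$
both manifestly independent of $b$. Plugging these into the lexicographic conditions that determine the adjustment index $k$ causes the terms involving $\sum_t d_{b,t}$ to cancel on each side, leaving a condition of the form $\sum_{i=1}^{m-1}\sum_j\tilde g_{j,t^{(i)}}<\sum_t D_t\le \sum_{i=1}^{m}\sum_j\tilde g_{j,t^{(i)}}$, where $D_t=\sum_j d_{j,t}$. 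Hence $k$ is also independent of $b$, and the closed-form cost \eqref{general optimal cost} evaluates to the same number for every battery bus, proving cost uniformity.

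For the discharging pattern, combining the $b$-independence of $k$ with Theorem \ref{optimal charging profile} gives the structural claim: in the regime implicit in the ``small $\tilde g$'' label -- namely $\sum_{i=1}^{T-1}\sum_j\tilde g_{j,t^{(i)}}<\sum_t D_t$ -- the adjustment index forces $k=T$, so $t^{(T)}$ is the unique discharging period while every cheaper period runs the battery at its maximum charging rate. I expect the main obstacle to be the single-bus reduction itself: one must verify that the extremal per-bus profiles $g_{i,t}\in\{0,\tilde g_{i,t}\}$ for $i\ne b$ are simultaneously realizable by some $\theta_t$ with $-f\le A\theta_t\le f$. This uses the null-space structure $\mathbbm{1}'B=0$ to reconstruct $\theta_t$ from any zero-sum injection, and the no-congestion hypothesis to certify that the reconstructed $\theta_t$ respects the flow bounds. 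Once that reduction is cleanly established, the remainder is algebraic substitution into the already-proved Theorems \ref{optimal charging profile} and \ref{thm2}.
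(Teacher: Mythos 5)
Your route is the Maxflow one that the paper's remark following the theorem only sketches; the paper's actual proof works with the KKT system of \eqref{P} directly (no congestion forces $\overline{\mu}_t=\underline{\mu}_t=0$, hence uniform LMPs at every period, from which both claims are read off). Your formula $F_{b,t}^{\text{in}}=\sum_{i\ne b}(\tilde g_{i,t}-d_{i,t})$ matches the one the paper states in its remark, and your cancellation showing that $F_{b,t}^{\text{in}}+\tilde g_{b,t}-d_{b,t}$ and the lexicographic conditions determining $k$ are independent of $b$ is correct, so the cost-uniformity half of your argument is essentially sound.

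Two gaps remain. First, $F_{b,t}^{\text{out}}=\sum_{i\ne b}d_{i,t}$ requires the configuration $g_{i,t}=0$ for all $i\ne b$, i.e., routing the entire network demand through the lines incident to $b$. The no-congestion hypothesis --- which, per Lemma \ref{suff cond no conges}, is the regime $\tilde g_{i,t}-d_{i,t}<\varepsilon$ in which the relevant dispatches keep \emph{net injections} (and hence flows) small --- says nothing about the feasibility of this large-flow extremal point, which in general violates $-f\le A\theta_t\le f$; the verification step you yourself flag fails precisely on the $g_{i,t}=0$ branch. This turns out to be harmless only because once $k=T$ the outflow terms in \eqref{general optimal cost} carry the coefficient $(c(t^{(T)})-c(t))^-=0$, but your uniformity argument as written leans on the unjustified value of $F^{\text{out}}$. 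Second, and more seriously, ``the battery only discharges at the highest cost period'' is exactly the claim $k=T$, and this does \emph{not} follow from absence of congestion alone: a network with very large line capacities and large generation surpluses is never congested, yet there $k=1$ and the battery discharges in every period except the cheapest. You acknowledge this by appealing to ``the regime implicit in the label,'' but you never derive the required inequality $\sum_{i=1}^{T-1}\sum_j\tilde g_{j,t^{(i)}}<\sum_t\sum_j d_{j,t}$ from the hypothesis; under Lemma \ref{suff cond no conges} it follows from the one-line bound $(T-1)\,n\,\varepsilon<\min_t\sum_j d_{j,t}$, which you should supply. As written, half of the theorem's conclusion is assumed rather than proved.
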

This implies that battery placement is irrelevant; one may place the battery at any bus and achieve the same optimal cost.

\textbf{Remark.} Although the proof uses KKT conditions, the solution can also be understood via a Maxflow formulation, whence $F_{b,t}^{in} = \sum_{i\neq b}\left(\tilde{g}_{i,t}-d_{i,t}\right)$. Note that since the battery only discharges in the highest cost period, the coefficient of $F_{b,t}^{out}$ in the optimal cost is zero, shown in \eqref{general optimal cost}.

The next lemma clarifies what we mean by ``no congestion."
\begin{lemma}[Extremely small generation capacity]\label{suff cond no conges}
    There exists a uniform bound $\varepsilon>0$, if $\max_{i\in N, t\in \mathcal{T}}\{\tilde{g}_{i,t}-  d_{i,t}\}<\varepsilon$, then the power grid remains uncongested at all times.
\end{lemma}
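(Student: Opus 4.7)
The plan is to exhibit an explicit feasible solution to \eqref{P} that is simultaneously optimal and strictly uncongested once $\varepsilon$ is sufficiently small. The first step is to collapse the objective using Assumption \ref{HomoCost}: summing the power-balance equation $g_t+u_t-d_t=B\theta_t$ across buses and using $\mathbbm{1}'B=0$ gives $\mathbbm{1}'g_t=\mathbbm{1}'d_t-u_{b,t}$, so the total cost equals a constant minus $\sum_t c(t)\,u_{b,t}$. Minimization therefore reduces to maximizing $\sum_t c(t)\,u_{b,t}$ over battery profiles with $\sum_t u_{b,t}=0$ and $u_{b,t}\in[-F_{b,t}^{\text{in}}-\tilde{g}_{b,t}+d_{b,t},\,F_{b,t}^{\text{out}}+d_{b,t}]$. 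The line-capacity-free bound $F_{b,t}^{\text{in}}\le\sum_{i\neq b}(\tilde{g}_{i,t}-d_{i,t})<(n-1)\varepsilon$ holds a priori, so each maximum charge has magnitude at most $n\varepsilon$; by the cyclic constraint the total discharge is also at most $(T-1)n\varepsilon$, giving $|u_{b,t}|\le(T-1)n\varepsilon$ at every $t$ in every optimum.

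Second, I would construct the candidate. Set $g_{i,t}=\tilde{g}_{i,t}$ for all $i$ and $t\neq t^{(T)}$, and at the peak period set $g_{i,t^{(T)}}=d_{i,t^{(T)}}$ for $i\neq b$ with $g_{b,t^{(T)}}=d_{b,t^{(T)}}-u_{b,t^{(T)}}$ determined by power balance. Every net injection at $t^{(T)}$ vanishes, so line flows are zero there. At each $t\neq t^{(T)}$ the net injections $p_{i,t}=g_{i,t}+u_{i,t}-d_{i,t}$ are bounded by $\varepsilon$ at non-battery buses and by $(n-1)\varepsilon$ at bus $b$ (by flow conservation); since $A\theta_t$ depends linearly on $p_t$ through a fixed topology-dependent matrix with finite induced infinity norm $C_{\text{net}}$, this yields $\|A\theta_t\|_\infty\le C_{\text{net}}(n-1)\varepsilon$. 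Choosing $\varepsilon$ smaller than both $f_{\min}/((n-1)C_{\text{net}})$ and $d_{\min}/((T-1)n)$ keeps both the line-flow and generation bounds strict. Since this construction attains the maximal charge at every $t\neq t^{(T)}$, the lexicographic condition in Theorem \ref{thm2} forces $k=T$, and by Theorem \ref{optimal charging profile} the resulting battery profile is optimal; the candidate is therefore an optimal uncongested solution.

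The main obstacle is avoiding circular reasoning between the bound on $|u_{b,t}|$ and the absence of congestion, since the feasibility intervals $[-F_{b,t}^{\text{in}}-\tilde{g}_{b,t}+d_{b,t},\,F_{b,t}^{\text{out}}+d_{b,t}]$ themselves depend on the congestion status. This is handled by using only the unconditional inequality $F_{b,t}^{\text{in}}\le\sum_{i\neq b}(\tilde{g}_{i,t}-d_{i,t})$, which follows directly from the capacity constraint $g_{i,t}\le\tilde{g}_{i,t}$ regardless of line limits. The constants $C_{\text{net}}$ and $d_{\min}$ depend only on fixed network data, so the resulting threshold $\varepsilon$ is indeed uniform as the lemma asserts.
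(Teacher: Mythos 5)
Your construction is sound in its essentials and is the natural route given the paper's machinery: collapse the objective to $\text{const}-\sum_t c(t)u_{b,t}$ via $\mathbbm{1}'B=0$, bound the optimal battery profile by the line-free inequality $F_{b,t}^{\text{in}}\le\sum_{i\neq b}(\tilde g_{i,t}-d_{i,t})<(n-1)\varepsilon$, exhibit an explicit dispatch attaining the line-unconstrained optimum, and check that every resulting net injection is $O(n\varepsilon)$ so that $\|A\theta_t\|_\infty\le C_{\text{net}}(n-1)\varepsilon<f_{\min}$. The identification $F_{b,t}^{\text{in}}=\sum_{i\neq b}(\tilde g_{i,t}-d_{i,t})$ (your feasibility check supplies the matching lower bound), the verification that the lexicographic test of Theorem \ref{thm2} yields $k=T$, and your handling of the potential circularity are all correct. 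One phrasing caveat: under homogeneous costs the generation dispatch at an optimum is not unique, so some optimal solutions may still congest lines; what you actually prove, and what Theorem \ref{small tilde g} needs, is that \emph{some} optimal solution is strictly uncongested, equivalently that deleting the line constraints does not change the optimal value. State it that way.

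The one step that genuinely needs patching is the peak period. You set $g_{b,t^{(T)}}=d_{b,t^{(T)}}-u_{b,t^{(T)}}$ and take $\varepsilon<d_{\min}/((T-1)n)$, but the model only assumes $d_t\in\mathbb{R}^n_{\ge 0}$, so $d_{b,t^{(T)}}$ may be zero (or merely smaller than the accumulated charge), in which case your threshold is vacuous and the candidate has $g_{b,t^{(T)}}<0$. The fix is easy but should be written down: at $t^{(T)}$, route the discharge $u_{b,t^{(T)}}\le (T-1)n\varepsilon$ outward and absorb it by reducing generation at other buses, which is possible because $u_{b,t^{(T)}}\le F_{b,t^{(T)}}^{\text{out}}+d_{b,t^{(T)}}\le\sum_i d_{i,t^{(T)}}$ and the latter is bounded away from zero (Lemma \ref{positive maxflow} forces positive demand somewhere at every period). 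The resulting peak-period injections are still $O((T-1)n\varepsilon)$, so the same $C_{\text{net}}$ bound applies and the uniform threshold survives with $d_{\min}$ replaced by a positive constant built from $F_{b,t^{(T)}}^{\text{out}}$ and the total peak demand. With that adjustment the argument is complete.
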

Intuitively, when the maximum generation capacity $\tilde{g}_{i,t}$ is very close to the demand $d_{i,t}$, there is insufficient flexibility for large power transfers, so no line becomes congested.

\subsection{Sufficiently large generation capacity}
In following parts, we consider scenarios where the power grid experiences congestion. To this end, we impose the following assumption throughout the remainder of the analysis.
\begin{assumption}[Limited transmission]\label{Congestion assumption}
For all $t$ and $(i,j) \in E$, the line capacity $f_{ij}$ is strictly less than the minimum of demands on the connected buses, i.e., $f_{ij} < \min\{d_{i,t}, d_{j,t}\}$.
\end{assumption}
Intuitively, there is less room for flexibility in the transmission route in that case. Although this assumption does not guarantee that congestion will occur, it creates conditions where congestion is more likely. Moreover, it simplifies the analysis: Facilitates the analysis of KKT (see Theorem~\ref{large tilde g}), and greatly simplifies \eqref{MOF} (see Lemmas~\ref{Maximum outflow in tree topology}, \ref{Maximum outflow in weakly-cyclic topology with homogeneous line capacity}).

Next, following this section's scope, we now assume that $\tilde{g}_t$ is sufficiently large.
\begin{assumption}\label{condition for ``large enough"}
$\tilde{g}_{j, t}\geq f_{ij}+d_{j,t} \ \forall t\in \mathcal{T},\ i\in N,\ j\in \mathcal{N}(i).$
\end{assumption}

Under LP theory, a solution of \eqref{P} is optimal if and only if the KKT conditions are satisfied. However, directly analyzing the original KKT system is difficult. The primary difficulty stems from the optimality condition $B\lambda_t + A^\prime(\overline{\mu}_t - \underline{\mu}_t) = 0$, which corresponds to the partial derivative of the Lagrangian 
with respect to the phase angle $\theta_t$.

This condition imposes global constraints on power transmission, creating complex interdependencies between distant nodes that hinder analytical progress. To this end, we want to analyze a ``local" power transmission case. We first assume the following.
\begin{assumption}\label{unit admittance}
    All transmission lines have unit admittance.
\end{assumption}
 This yields the matrix identity $B = A^\prime A$. Then we focus on a subset of KKT solutions satisfying a locality condition below (cf. \eqref{Enhanced KKT}, we will justify its feasibility after Theorem \ref{large tilde g}). The following theorem characterizes \eqref{MIF} and \eqref{MOF}.

\begin{theorem}\label{large tilde g}
Under Assumptions~\ref{Congestion assumption}, \ref{condition for ``large enough"} and \ref{unit admittance}, we use superscript $*$ to denote the solution of KKT conditions and impose\begin{equation}\label{Enhanced KKT}
    \left\{\text{(KKT solutions)} \left| A\lambda_t^* + (\overline{\mu}_t^* - \underline{\mu}_t^*) = 0, \forall t\in \mathcal{T}\right.\right\} \neq \emptyset.
\end{equation}
Then, at any time $t \in \mathcal{T}$, $F_{b,t}^{\text{in}} = F_{b,t}^{\text{out}} = \sum_{i\in \mathcal{N}(b)}f_{bi}$.
\end{theorem}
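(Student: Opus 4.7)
The plan is to prove the equality by sandwiching $F_{b,t}^{\text{in}}$ between a capacity-based upper bound and a matching lower bound obtained from LP duality. The upper bound is immediate. Summing the power balance $g_t+u_t-d_t=B\theta_t$ over all buses and using that $B$ is a Laplacian under Assumption \ref{unit admittance} (so $\mathbbm{1}'B=0$), together with $u_{i,t}=0$ for $i\neq b$, yields $\sum_{i\neq b}(g_{i,t}-d_{i,t})=-[B\theta_t]_b$. Writing $[B\theta_t]_b=\sum_{j\in\mathcal{N}(b)}(\theta_b-\theta_j)$ and applying the edge capacity $|\theta_b-\theta_j|\leq f_{bj}$ gives $F_{b,t}^{\text{in}}\leq \sum_{j\in\mathcal{N}(b)}f_{bj}$; the argument for $F_{b,t}^{\text{out}}$ is symmetric.

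For the matching lower bound, I would exhibit a primal feasible point of \eqref{MIF} (or an equivalent dual one) that saturates every incident edge. The natural primal construction sets $g_{j,t}=d_{j,t}+f_{bj}$ for each neighbor $j\in\mathcal{N}(b)$, which is admissible by Assumption \ref{condition for ``large enough"}, with $g_{i,t}=d_{i,t}$ at the remaining non-$b$ buses, $g_{b,t}=0$, and $u_{b,t}$ absorbing the surplus; the required phase angles must satisfy $\theta_j-\theta_b=f_{bj}$ for $j\in\mathcal{N}(b)$ while keeping $|A\theta_t|\leq f$ elsewhere. On the dual side, one can write the LP dual of \eqref{MIF} (after eliminating $u_{b,t}$ via the balance at $b$) and construct a dual feasible point in which $\lambda$ takes one value at $b$ and a different uniform value elsewhere, with nonzero edge multipliers supported only on $E(b)$. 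The identity $B=A'A$ from Assumption \ref{unit admittance} collapses the dual stationarity condition $B\lambda+A'(\overline\mu-\underline\mu)=0$ to the local form $A'(A\lambda+\overline\mu-\underline\mu)=0$, which the proposed dual satisfies; its objective evaluates to $\sum_{j\in\mathcal{N}(b)}f_{bj}$.

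The main obstacle is showing that this dual point is actually optimal, or equivalently, that the incident edges can be simultaneously saturated without violating capacity on a downstream edge. This is exactly where the hypothesis \eqref{Enhanced KKT} enters: by assuming the existence of a KKT solution of \eqref{P} with $A\lambda_t^*+(\overline\mu_t^*-\underline\mu_t^*)=0$, we restrict attention to a regime in which the congestion prices concentrate on $E(b)$, and complementary slackness then returns a primal $\theta_t$ with exactly the required saturation pattern. Assumption \ref{Congestion assumption} keeps the generation multipliers at non-neighbor buses inactive, ensuring that non-incident edges can carry the induced flows within their capacities. Strong LP duality then pins down $F_{b,t}^{\text{in}}=\sum_{j\in\mathcal{N}(b)}f_{bj}$, and $F_{b,t}^{\text{out}}$ follows by the symmetric argument obtained by swapping the roles of demand and generation.
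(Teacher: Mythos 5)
Your upper bound is correct and matches the easy half of the argument: summing the nodal balance and using $\mathbbm{1}'B=0$ reduces the objective of \eqref{MIF} to the flow across the cut $\{b\}$ versus $N\setminus\{b\}$, which is capped by $\sum_{j\in\mathcal{N}(b)}f_{bj}$. You have also correctly located where the real work lies: showing the incident edges can be \emph{simultaneously} saturated. But the step you assign to \eqref{Enhanced KKT} --- ``complementary slackness then returns a primal $\theta_t$ with exactly the required saturation pattern'' --- is the entire content of the theorem, and as written it does not follow. The hypothesis \eqref{Enhanced KKT} only gives an edge-by-edge implication: if $\lambda^*_{i,t}\neq\lambda^*_{j,t}$ on an edge $(i,j)$, then that edge is congested (in the direction of increasing price). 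To conclude that \emph{every} edge in $E(b)$ is congested, and congested in a \emph{consistent} direction so that the cut flow equals $\sum_{j}f_{bj}$ rather than some partial cancellation, you must additionally establish the structure of the optimal LMP profile of \eqref{P}: that in all but (at most) one period the non-battery buses share a common LMP while $b$'s LMP is distinct. That is the heart of the paper's argument (it is what the two bullet points in the proof sketch assert, and it uses Assumptions~\ref{Congestion assumption} and \ref{condition for ``large enough"} to control which generation multipliers can be active), and your proposal supplies no derivation of it; ``congestion prices concentrate on $E(b)$'' is a conclusion, not a consequence of \eqref{Enhanced KKT} by itself.

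A second, related defect is in your explicit primal construction. Prescribing $\theta_j-\theta_b=f_{bj}$ for every $j\in\mathcal{N}(b)$ while demanding zero net injection at all non-neighbors is generally infeasible under DC power flow: on a cycle through $b$, say a triangle $b,j,j'$, those phase assignments force a flow $f_{bj}-f_{bj'}$ on the edge $(j,j')$, which perturbs the balances at $j$ and $j'$ and can violate capacities elsewhere. Kirchhoff's voltage law prevents you from routing all of the cut flow exclusively on $E(b)$ except in special cases (trees, or homogeneous capacities where $\theta_i\equiv\Lambda_t$ off $b$ works) --- which is exactly why the paper needs \eqref{Enhanced KKT} as a standing hypothesis and devotes Theorem~\ref{feasible enhance KKT} to sufficient conditions for it. Finally, note that the theorem asserts \emph{both} $F^{\text{in}}_{b,t}$ and $F^{\text{out}}_{b,t}$ equal $\sum_j f_{bj}$ for \emph{every} $t$, whereas any single period of the optimal solution of \eqref{P} is either a charging period, a discharging period, or the adjustment period; your ``symmetric argument'' needs to say how one optimal trajectory certifies both quantities at the same $t$, including at the one period where the LMPs may be uniform and no saturation is forced.
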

We give a \textit{proof sketch} here. Recall that the LMP at time $t$ is given by the dual variable $\lambda_t$ in \eqref{P}. First note that \eqref{Enhanced KKT} indicates that if at time $t$, two neighboring buses $i$ and $j$ have different LMPs, then the full line capacity $f_{ij}$ is dispatched. Based on this property and the structure of LP we can argue:
\begin{itemize}
    \item At most one time period can have uniform LMPs.
    \item In other periods, all non-battery buses share a common LMP, while the battery bus has a distinct value.
\end{itemize}
From this observation, all the results can be derived. Q.E.D.

As discussed after Theorem \ref{thm2}, in that case, how much the battery can charge or discharge over multiple periods is determined by the weighted degree of the battery bus and independently of higher-order topology characteristics. This ``local" property is due to \eqref{Enhanced KKT}, essentially.

We next address the feasibility of \eqref{Enhanced KKT}. While Assumption \ref{Sufficient generation assumption} guarantees solutions to \eqref{P} and the corresponding KKT system exist (e.g., all buses are self-sufficient, no power transmission), the condition $A\lambda_t^* + (\overline{\mu}_t^* - \underline{\mu}_t^*) = 0$ excludes solutions where $A\lambda_t + (\overline{\mu}_t - \underline{\mu}_t) \in \mathrm{Null}(A^\prime)$. This may render \eqref{Enhanced KKT} infeasible, as demonstrated by a concrete example in the appendix.

We next discuss a few scenarios where \eqref{Enhanced KKT} holds. In a network with tree topology, \eqref{Enhanced KKT} naturally holds because of the absence of cyclic structures, eliminating the need for additional assumptions. In general topologies, a simple sufficient condition for \eqref{Enhanced KKT} is homogeneous line capacities. In this case, we can simply set $\theta_{b,t} = 0$ and $\theta_{i,t} = \Lambda_t$ for all $i \neq b$, where $\Lambda_t$ is a constant depending on time. Then, power transmission only occurs along the battery's adjacent lines. By setting $|\Lambda_t|$ equal to the line capacity for $T-1$ time periods (while adjusting the remaining periods to maintain cyclic SoC), we achieve the maximum power transmission in each period, consequently obtaining the optimal cost solution.

Beyond these two cases, the following theorem provides a more general condition under which \eqref{Enhanced KKT} admits a solution.

\begin{theorem}\label{feasible enhance KKT}
Under Assumptions~\ref{Congestion assumption}, \ref{condition for ``large enough"} and \ref{unit admittance}. Let $D \coloneqq \max_{i} |\mathcal{N}(i)|$ denote the maximum degree in the network. 
If the following conditions hold:
$$
\begin{dcases}
\frac{D-1}{D-2}\max_{{(i,j)}\in E} f_{ij} - \min_{{(i,j)}\in E} f_{ij} \leq \min_{i,t}\frac{d_{i,t}}{D-2},\\
\frac{\max_{{(i,j)}\in E} f_{ij}}{\min_{{(i,j)}\in E} f_{ij}}\leq\min\left\{2, \frac{D}{D-2}\right\},
\end{dcases}
$$
then \eqref{P} has a optimal solution that satisfies \eqref{Enhanced KKT}.
\end{theorem}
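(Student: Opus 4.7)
The plan is to exhibit an explicit primal-dual pair for \eqref{P} whose dual satisfies \eqref{Enhanced KKT} edge by edge; this directly establishes that the set in \eqref{Enhanced KKT} is nonempty. The construction extends the homogeneous-line-capacity argument sketched after Theorem \ref{large tilde g} to the heterogeneous regime quantified by conditions (a) and (b). In each charging period $t$ (one of the low-cost periods identified in Theorem \ref{optimal charging profile}) I would set $\theta_{b,t}=0$ and $\theta_{i,t}=f_{bi}$ for each $i\in\mathcal{N}(b)$, so every $b$-incident edge sits at capacity with flow directed into $b$; for buses $k\notin\mathcal{N}(b)\cup\{b\}$ I would assign a single scalar $\theta_{k,t}=\gamma_t$ drawn from the range of $\{f_{bi}\}_{i\in\mathcal{N}(b)}$. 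Discharging periods are handled symmetrically with $\theta_{i,t}=-f_{bi}$, and the adjustment period $t^{(k)}$ receives a scaled-down version of one of these patterns, with the scale fixed by $[U]_b\mathbbm{1}=0$. Condition (b) gives $\max f_{ij}-\min f_{ij}\leq\min f_{ij}$, which immediately bounds every interior flow $\theta_i-\theta_j$ within its own line capacity.

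Next I would verify that $g_{i,t}=d_{i,t}+[B\theta_t]_i\in[0,\tilde{g}_{i,t}]$ at every bus $i\neq b$. Using $B=A'A$, the net injection $[B\theta_t]_i$ telescopes into a sum over $j\in\mathcal{N}(i)$ of differences $\theta_{i,t}-\theta_{j,t}$, each bounded in magnitude by $\max f_{ij}-\min f_{ij}$ under condition (b). The positive excursion of this sum is absorbed by the slack $\tilde{g}_{i,t}-d_{i,t}\geq\max_{j\in\mathcal{N}(i)}f_{ij}$ supplied by Assumption \ref{condition for ``large enough"}; the negative excursion, of size at most $(D-1)(\max f_{ij}-\min f_{ij})+\max f_{ij}$ in the worst case, is absorbed by $d_{i,t}$, which is exactly the rearranged content of condition (a), $\min_{i,t}d_{i,t}\geq(D-1)\max f_{ij}-(D-2)\min f_{ij}$. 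Balance at $b$ then forces $g_{b,t}$ to the appropriate boundary of $[0,\tilde{g}_{b,t}]$ and pins down $u_{b,t}$ uniquely from the power-balance equation.

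For the dual I would set $\lambda_{i,t}=c(t)$ at every $i\neq b$ and $\lambda_{b,t}=c(t)+\delta_t$ with $\delta_t>0$ in charging periods and $\delta_t<0$ in discharging periods; place $|\delta_t|$ in either $\overline{\mu}_{(b,i),t}$ or $\underline{\mu}_{(b,i),t}$ on every $b$-incident edge while leaving all other $\mu$ entries at zero; set the generation duals $\beta$ to zero at non-battery buses and to the appropriate value at $b$ to match whichever boundary $g_{b,t}$ sits on. By construction $(A\lambda_t)_e+\overline{\mu}_{e,t}-\underline{\mu}_{e,t}=0$ on every edge $e$, so \eqref{Enhanced KKT} is satisfied; stationarity in $g$ is automatic at non-battery buses because $\lambda_{i,t}$ matches the cost coefficient there; complementary slackness on the line capacity constraints holds because only the $b$-incident edges are binding. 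LP duality then certifies that the constructed pair is optimal and that the set defining \eqref{Enhanced KKT} is nonempty.

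The main obstacle is the feasibility check above: conditions (a) and (b) are tightly tuned against the worst case of $D-1$ telescoping phase-angle differences at a bus $i\in\mathcal{N}(b)$ whose own neighborhood overlaps heavily with $\mathcal{N}(b)$, and the tighter branch $\max f_{ij}/\min f_{ij}\leq D/(D-2)$ of condition (b) (active once $D\geq4$) is precisely what keeps the accumulated swing inside the available demand and generation slacks. Choosing $\gamma_t$ carefully---not necessarily the midpoint in asymmetric cases---is what equalizes the positive and negative excursions simultaneously, and this bookkeeping is where I expect the argument to be most delicate.
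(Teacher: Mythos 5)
Your overall strategy---an explicit primal--dual pair in which the battery-adjacent lines are saturated, the LMP equals $c(t)$ at every non-battery bus, and the line multipliers live only on the $b$-incident edges---is the natural extension of the homogeneous-capacity construction sketched after Theorem~\ref{large tilde g}, and the dual half of your argument is sound. The primal feasibility check, however, which is the entire content of the theorem, does not close as written. Write $f_{\max}\coloneqq\max_{(i,j)\in E}f_{ij}$ and $f_{\min}\coloneqq\min_{(i,j)\in E}f_{ij}$. First, your stated worst-case negative excursion at a bus $i\in\mathcal{N}(b)$ is $(D-1)(f_{\max}-f_{\min})+f_{\max}=Df_{\max}-(D-1)f_{\min}$, whereas condition (a) rearranges to $\min_{i,t}d_{i,t}\geq (D-1)f_{\max}-(D-2)f_{\min}=(D-1)(f_{\max}-f_{\min})+f_{\min}$; these differ by exactly $f_{\max}-f_{\min}>0$, so the bound you call ``exactly the rearranged content of condition (a)'' does not cover your own worst case. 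Second, for the upper generation bound you invoke the slack $\tilde g_{i,t}-d_{i,t}\geq\max_{j\in\mathcal{N}(i)}f_{ij}$ from Assumption~\ref{condition for ``large enough"}, but that maximum may be attained by $f_{bi}$ itself, in which case the export to $b$ consumes the entire slack and nothing remains to absorb a positive interior excursion. Third, the branch $f_{\max}/f_{\min}\leq D/(D-2)$ of condition (b) never appears in any inequality you actually verify, which signals that the bookkeeping calibrated to the $(D-2)$ coefficients in the hypotheses has not been carried out.

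The gap is not cosmetic, and no choice of the single scalar $\gamma_t$ repairs it. Consider $K_4$ with the battery at $b$, $f_{b1}=2$, all other line capacities equal to $1$, $d_{i,t}=3$ and $\tilde g_{i,t}=5$ for all $i,t$: then $D=3$, Assumptions~\ref{Congestion assumption}--\ref{unit admittance} and both conditions of the theorem hold, yet saturating the three $b$-incident lines outward forces, through Kirchhoff's voltage law, an extra unit of flow into bus $1$ from each of buses $2$ and $3$, so bus $1$ must absorb $4>d_{1,t}$ units and the construction is infeasible; a direct computation gives $F_{b,t}^{\text{out}}=11/3<\sum_{i\in\mathcal{N}(b)}f_{bi}=4$ on this instance, and the failure persists under small perturbations that make both conditions strict. (This also stresses the theorem itself via Theorem~\ref{large tilde g}, so you should check whether the hypotheses need strengthening.) To complete the argument you would need either a genuinely different routing---non-constant phase angles on $N\setminus(\mathcal{N}(b)\cup\{b\})$, or sub-saturated $b$-incident lines paired with a consistent dual---or additional structural restrictions; at minimum you must exhibit the step where $f_{\max}/f_{\min}\leq D/(D-2)$ is used and show how the number of adverse interior terms at a neighbor of $b$ is reduced from $D-1$ to $D-2$.
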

These conditions extend the homogeneous case and intuitively constrain both the relative and absolute values of $f_e$. The first inequality bounds the absolute magnitude of line capacities and the difference between the maximum and minimum line capacity, while the second inequality limits the ratio between them.

\subsection{Tree-topology Power Grid}
In this section, we consider tree-topology power grids without unit admittance and sufficiently large generation capacity assumptions. Before the analysis, we first give some notation. Due to the properties of the tree structure, we designate the battery placement bus $b$ as the root of the tree. Its neighboring set is denoted by $\mathcal{N}(b)$, and for each $j\in \mathcal{N}(b)$ we denote by $\mathcal{T}_b^j$ the subtree rooted at $j$ (which includes both buses and lines). For notational simplicity, we let $|\mathcal{T}_b^j|$ denote the number of buses in the subtree whenever no ambiguity arises.

Now, we aim to characterize the battery inflow $F_{b,t}^{\text{in}}$ and the outflow $F_{b,t}^{\text{out}}$. The limited transmission assumption (cf. Assumption \ref{Congestion assumption}) enables us to derive an analytical solution for \eqref{MOF} when the network topology is a tree.
\begin{lemma}\label{Maximum outflow in tree topology}
Under Assumption \ref{Congestion assumption}, $\forall t\in \mathcal{T}$, $b\in N$, we have $F_{b,t}^{\text{out}} = \sum_{i\in \mathcal{N}(b)} f_{bi}$, where $\mathcal{N}(b)$ is the neighbor set of $b$.
\end{lemma}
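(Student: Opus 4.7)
The plan is to establish the equality $F_{b,t}^{\text{out}} = \sum_{i \in \mathcal{N}(b)} f_{bi}$ by matching an upper and a lower bound, with the tree structure crucially enabling the achievability direction.

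\textbf{Upper bound.} First I would rewrite the objective $\sum_{i \neq b}(d_{i,t} - g_{i,t})$ as the total net power flow leaving bus $b$. Summing the power balance $g_t + u_t - d_t = B\theta_t$ over all buses and using $\mathbbm{1}' B = 0$ (which follows from $B = A'A$ and $A\mathbbm{1} = 0$ for an incidence matrix), together with $u_{i,t} = 0$ for $i\neq b$, gives $\sum_i g_{i,t} + u_{b,t} = \sum_i d_{i,t}$. Rearranging yields
$$\sum_{i\neq b}(d_{i,t}-g_{i,t}) = u_{b,t} + g_{b,t} - d_{b,t} = (B\theta_t)_b,$$
which is precisely the sum of flows from $b$ to its neighbors. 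Since each line flow satisfies $|(A\theta_t)_{(b,i)}| \le f_{bi}$, the objective is bounded above by $\sum_{i\in \mathcal{N}(b)} f_{bi}$.

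\textbf{Achievability via tree construction.} I would exhibit a feasible point that saturates every line incident to $b$ in the outward direction. Fixing $b$ as the root, each neighbor $j\in\mathcal{N}(b)$ anchors a subtree $\mathcal{T}_b^j$ connected to the rest only through the single line $(b,j)$. By Assumption \ref{Congestion assumption}, $f_{bj} < d_{j,t}$, and by Assumption \ref{Sufficient generation assumption}, $\tilde{g}_{j,t} > d_{j,t}$. Hence I can set $g_{j,t} = d_{j,t} - f_{bj} \in [0, \tilde{g}_{j,t}]$ so that bus $j$ locally absorbs the inbound $f_{bj}$, while every other bus $i$ in the subtree matches its own demand, $g_{i,t} = d_{i,t}$. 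This makes every non-incident line carry zero flow. Finally, I choose any admissible $g_{b,t}\in[0,\tilde{g}_{b,t}]$ and set $u_{b,t} = \sum_{i\in\mathcal{N}(b)} f_{bi} + d_{b,t} - g_{b,t}$, which matches the power balance at $b$.

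\textbf{Consistency of phase angles.} The one step requiring care is verifying that this flow pattern is realized by some $\theta_t$ satisfying $-f \le A\theta_t \le f$. Because the graph is a tree, the reduced incidence matrix (after removing a reference bus) is invertible, so any net-injection vector consistent with Kirchhoff's current law uniquely determines $\theta_t$ up to an additive constant. Fixing $\theta_{b,t}=0$, a recursive traversal from $b$ sets $\theta_{j,t}$ by matching the flow $f_{bj}$ along line $(b,j)$, and then propagates a constant value across each subtree (since those lines carry zero flow). No line other than the incident ones is loaded, so all capacity constraints are trivially satisfied.

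\textbf{Main obstacle.} The conceptual difficulty is ensuring that all line capacities along the interior of each subtree are honored while simultaneously saturating the edges at $b$; this is exactly where the tree hypothesis does the work, because absence of cycles means each subtree has a single gateway edge and can absorb its load entirely at the closest bus without backflow. Combining the upper bound and the explicit construction yields $F_{b,t}^{\text{out}} = \sum_{i\in\mathcal{N}(b)} f_{bi}$.
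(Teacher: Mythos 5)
Your proof is correct and follows what is essentially the paper's argument: the objective equals the net flow out of bus $b$, which is capped by the cut capacity $\sum_{i\in\mathcal{N}(b)}f_{bi}$, and Assumption \ref{Congestion assumption} ($f_{bj}<d_{j,t}$) together with Assumption \ref{Sufficient generation assumption} lets each neighbor absorb its full incident line capacity locally, while the tree structure guarantees the resulting flow pattern (zero on all non-incident lines) is realizable by phase angles. No gaps.
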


The next step is to characterize the maximum battery inflow. In the tree topology, this quantity can be computed inductively (shown in Algorithm 1 later). Moreover, we have the following assumption and the subsequent lemma.
\begin{assumption}\label{uniform line capa}
    All transmission lines have a uniform capacity.
\end{assumption}
\begin{lemma}\label{maxflow tree}
In a tree-topology power grid satisfying Assumption \ref{uniform line capa}, $F_{b,t}^{\text{in}} = \sum_{j\in \mathcal{N}(b)}\min\left\{\sum_{i\in\mathcal{T}_b^j}(\tilde{g}_i^{t} - d_i^{t}),f\right\}$. 
\end{lemma}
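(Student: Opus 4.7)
The plan is to exploit the tree structure: with $b$ as the root, the network decomposes into disjoint subtrees $\{\mathcal{T}_b^j\}_{j \in \mathcal{N}(b)}$ connected to $b$ only through the single bridge edges $(b,j)$. Since $u_{i,t}=0$ for all $i\neq b$, aggregating the power-balance constraint over every node in $\mathcal{T}_b^j$ shows that the flow on the bridge $(b,j)$ directed toward $b$ equals $\sum_{i\in\mathcal{T}_b^j}(g_{i,t}-d_{i,t})$. Hence the \eqref{MIF} objective $\sum_{i\neq b}(g_{i,t}-d_{i,t})$ decomposes as a sum over subtrees of these bridge flows, so it suffices to maximize each one independently subject to feasibility inside the corresponding subtree.

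For the upper bound on a single subtree, the line capacity constraint $-f\le A\theta_t\le f$ forces the bridge flow to be at most $f$, while the generation cap $g_{i,t}\le\tilde{g}_{i,t}$ together with $u_{i,t}=0$ for $i\neq b$ forces it to be at most $\sum_{i\in\mathcal{T}_b^j}(\tilde{g}_{i,t}-d_{i,t})$, yielding the claimed termwise bound. For achievability I reparameterize $g_{i,t}=d_{i,t}+x_i$ with $x_i\in[0,\tilde{g}_{i,t}-d_{i,t}]$, where the lower bound $x_i\ge 0$ is permissible precisely because Assumption \ref{Sufficient generation assumption} guarantees $\tilde{g}_{i,t}-d_{i,t}>0$. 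For any node $i$ in $\mathcal{T}_b^j$, the flow on the edge from $i$ to its parent equals $\sum_{k\in\mathrm{subtree}(i)}x_k$, and since $x_k\ge 0$ the monotonicity $\sum_{k\in S}x_k\le\sum_{k\in\mathcal{T}_b^j}x_k$ for every $S\subseteq\mathcal{T}_b^j$ automatically bounds every interior flow by the chosen bridge flow.

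Thus, if $\sum_{i\in\mathcal{T}_b^j}(\tilde{g}_{i,t}-d_{i,t})\le f$, I set every $x_i$ at its upper bound; otherwise I pick any $\{x_i\}$ summing to $f$, which is feasible since the upper bounds of $\{x_i\}$ sum to more than $f$ and the lower bounds to zero. Assumption \ref{uniform line capa} then ensures that the single cap $f$ controls every edge, so all interior capacity checks pass. Phase angles realizing these edge flows are obtained by solving $g_t+u_t-d_t=B\theta_t$ after absorbing any residual imbalance into the free variable $u_{b,t}$, which is not itself bounded in \eqref{MIF}. Since different subtrees occupy disjoint edges and variables, combining the per-subtree constructions yields the claimed value of $F_{b,t}^{\text{in}}$.

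The main obstacle is the achievability step, specifically ruling out interior capacity constraints as a tightening of the bound. The decisive ingredient is $x_i\ge 0$, which hinges on Assumption \ref{Sufficient generation assumption}; without it, a descendant subtree could net-export more than its ancestor and force a smaller achievable bridge flow. Assumption \ref{uniform line capa} plays an analogous role: with heterogeneous line capacities the bound would have to involve a minimum along paths from $b$ or a richer max-flow argument, rather than the single scalar $f$.
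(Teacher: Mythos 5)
Your proof is correct and follows essentially the same route as the paper: root the tree at $b$, decompose the objective into the per-subtree bridge flows, bound each by the bridge capacity $f$ and by the subtree generation surplus, and achieve the bound by observing that the nonnegative surpluses $x_i = g_{i,t}-d_{i,t}\ge 0$ (guaranteed by Assumption \ref{Sufficient generation assumption}) make every interior edge flow a partial sum of the bridge flow and hence automatically within capacity, with phase angles freely realizable on a tree and the residual imbalance absorbed by $u_{b,t}$. The only cosmetic difference is that you argue directly via an upper bound plus explicit construction, whereas the paper packages the same computation as a subtree recursion.
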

In the above expression, we have slightly abused the notation by letting $f$ denote the scalar homogeneous line capacity. Then the bang-bang control profile can be determined via Theorem \ref{optimal charging profile}, and the optimal cost can be derived using Theorem \ref{general optimal cost}. In particular, we have the following result.
\begin{theorem}\label{tree cost}
In a tree-topology power grid with Assumption \ref{Congestion assumption} and \ref{uniform line capa} hold, the optimal cost will be:
\begin{align*}
    &\sum_{t=1}^T c(t)\mathbbm{1}'g_t = \sum_{t=1}^T\sum_{i=1}^n d_{i,t}c(t) + \sum_{t=1}^Td_{b,t}(c(t^{(k)})-c(t)) \nonumber\\
    &-\sum_{t=1}^T \tilde{g}_b^t(c(t^{(k)})-c(t))^+ - |\mathcal{N}(b)|f\sum_{t=1}^T |c(t^{(k)})-c(t)|\nonumber\\
    &+\sum_{t=1}^T\sum_{j\in \mathcal{N}(b)}\max\left\{f-\sum_{i\in\mathcal{T}_b^j}(\tilde{g}_i^t - d_i^t),0\right\} (c(t^{(k)})-c(t))^+.
\end{align*}

Here, $k$ retains its previous meaning and is determined by the lexicographic order in Theorem \ref{thm2}.
\end{theorem}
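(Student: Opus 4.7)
The plan is to derive the cost formula by direct substitution of Lemmas~\ref{Maximum outflow in tree topology} and \ref{maxflow tree} into the analytical expression \eqref{general optimal cost} from Theorem~\ref{thm2}. Under Assumptions~\ref{Congestion assumption} and \ref{uniform line capa}, both $F_{b,t}^{\text{in}}$ and $F_{b,t}^{\text{out}}$ have closed forms that depend only on the weighted degree of $b$ and the subtree surpluses, so the theorem becomes a purely algebraic identity that needs to be repackaged into the claimed shape. The adjustment index $k$ and the bang-bang structure of the charging profile are inherited from Theorems~\ref{optimal charging profile} and \ref{thm2} without modification.

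\textbf{Key steps.} First, I would invoke Lemma~\ref{Maximum outflow in tree topology} to obtain $F_{b,t}^{\text{out}} = |\mathcal{N}(b)|\,f$ and Lemma~\ref{maxflow tree} to obtain $F_{b,t}^{\text{in}} = \sum_{j\in \mathcal{N}(b)}\min\{\sum_{i\in \mathcal{T}_b^j}(\tilde{g}_i^t - d_i^t),\ f\}$. Second, to match the $\max$-based shape in the statement, I would rewrite $\min\{x,f\} = f - \max\{f-x,0\}$, giving
\begin{equation*}
F_{b,t}^{\text{in}} = |\mathcal{N}(b)|\,f \;-\; \sum_{j\in \mathcal{N}(b)} \max\!\Bigl\{f - \sum_{i\in \mathcal{T}_b^j}(\tilde{g}_i^t - d_i^t),\ 0\Bigr\}.
\end{equation*}
Third, I would substitute both expressions into \eqref{general optimal cost} and collect terms by their multiplicative coefficient. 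The $|\mathcal{N}(b)|f$ contributions from $F_{b,t}^{\text{in}}$ and $F_{b,t}^{\text{out}}$ combine through the identity $(c(t^{(k)})-c(t))^+ + (c(t^{(k)})-c(t))^- = |c(t^{(k)})-c(t)|$ into the term $-|\mathcal{N}(b)|f\sum_t |c(t^{(k)})-c(t)|$. The $d_{b,t}$ contributions combine through $(c(t^{(k)})-c(t))^+ - (c(t^{(k)})-c(t))^- = c(t^{(k)})-c(t)$ into $\sum_t d_{b,t}(c(t^{(k)})-c(t))$. The $\tilde{g}_{b,t}$ piece and the subtree-max piece each inherit a single positive-part factor, reproducing exactly the two remaining sums in the statement.

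\textbf{Main obstacle.} There is no conceptual obstacle: once the analytical Maxflow formulas are in hand, the proof is mechanical verification, and $k$ is determined verbatim by the lexicographic rule of Theorem~\ref{thm2} applied to the same closed forms. The only real source of error is bookkeeping of the positive/negative parts. I would mitigate this by first grouping all terms carrying $(c(t^{(k)})-c(t))^+$, then all those carrying $(c(t^{(k)})-c(t))^-$, and only at the end invoking the two elementary identities above to merge them into $|c(t^{(k)})-c(t)|$ and $c(t^{(k)})-c(t)$. A minor sanity check — that Lemmas~\ref{Maximum outflow in tree topology} and \ref{maxflow tree} apply in exactly the regime of the theorem (tree topology, Assumption~\ref{Congestion assumption}, uniform line capacity) — closes the argument.
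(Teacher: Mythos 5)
Your proposal is correct and takes essentially the same route as the paper: the paper likewise obtains Theorem~\ref{tree cost} by substituting the closed forms from Lemmas~\ref{Maximum outflow in tree topology} and \ref{maxflow tree} into \eqref{general optimal cost}, rewriting $\min\{x,f\}=f-\max\{f-x,0\}$, and regrouping the $|\mathcal{N}(b)|f$, $d_{b,t}$, and $\tilde{g}_{b,t}$ coefficients via the positive/negative-part identities, with $k$ carried over unchanged from Theorem~\ref{thm2}. Your bookkeeping is consistent with the paper's convention $(x)^- = \max\{-x,0\}$, so the algebra closes exactly as you describe.
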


The above result illustrates how the network topology influences the optimal cost. The first term is a constant, while the second and third terms depend on the total demand at the battery bus and the adjustment period $k$. Due to the assumption of homogeneous line capacity, the weighted degree simplifies to $|\mathcal{N}(b)|f$, which is just the coefficient in the fourth term. Also, this term is affected by the choice of $k$. Hence, the effect of the first four terms is: when $k$ remains constant, a higher weighted degree correlates with a lower optimal cost.
The last term is bounded by a modified version of betweenness centrality of the battery placement bus $b$ (see appendix for details). This indicates that a bus with a higher betweenness centrality could lead to a lower overall cost and be preferable.

This theorem illustrates the impact of the graph topology on the optimal cost. While the overall effect is complicated, we have two simplified cases in which the solely impact of $d_b^t$, $\tilde{g}_b^t - d_b^t$, and the degree can be determined. In particular, although betweenness centrality may affect cost, we can identify two specific cases where the only topological effect on optimal cost is the weighted degree, independent of any higher-order topological metrics. 

\textbf{Case 1}. If $\forall j \in \mathcal{N}(b), t\in \mathcal{T}$, there is $f \leq \sum_{i \in \mathcal{T}_b^j} (\tilde{g}_i^t - d_i^t)$, then by Lemma~\ref{maxflow tree} the maximum inflow $F_{b,t}^{\text{in}}$ depends only on the weighted degree, and the last term in Theorem~\ref{tree cost} becomes solely a function of this degree.

\textbf{Case 2}. If $\forall j \in \mathcal{N}(b), t\in \mathcal{T}$, there is $f > \sum_{i \in \mathcal{T}_b^j} (\tilde{g}_i^t - d_i^t)$, then similarly the weighted degree fully captures the impact of the network topology on the optimal cost.

In both cases, when two buses have identical $d^t$ and $\tilde{g}^t - d^t$ values across all time periods, the bus with higher (weighted) degree is considered a superior location for battery placement. Moreover, if two buses possess the same degree in the above two cases, the one consistently showing higher values for both $d^t$ and $\tilde{g}^t - d^t$ throughout all time periods presents a more economical choice. In essence, assuming all other factors are constant, greater demand does not necessarily guarantee a more advantageous placement.

\subsection{Performance bound in general topology}
We now explore the extent of benefits from installing a single battery in the power grid. In this part we show that: in a sufficiently ``congested" network, the cost reduction ratio due to battery integration will diminish to zero as the power grid scales up under mild conditions.

To quantify the performance of introducing a battery, let $\text{opt}^0 \coloneqq \sum_{t=1}^T\sum_{i=1}^n d_{i,t}c(t)$, which means the cost without battery. We assume that the optimal cost given by the solver for \eqref{P} with battery placement at bus $i$, i.e., $b=i$, is $\text{opt}(i)$, and the optimal position is $b^* = \arg\min_{i\in N} \text{opt}(i)$. The performance is evaluated by the non-negative value $\delta = \max_{b\in N}\left(\text{opt}^{0}-\text{opt}(b)\right)/\text{opt}^{0}$. This value quantifies the maximum relative cost reduction achieved through optimal battery placement. The subsequent theorem demonstrates that this performance metric approaches zero as the power grid expands with limited transmission capacity.
\begin{theorem}\label{performance bound}
    In general topology with Assumption \ref{Congestion assumption}, suppose that the time horizon $T$ is finite, and as the power grid grows,
    \begin{align}\label{cond_for_performance_bound}
        \begin{dcases}
            \lim_{n\to\infty}\sum_{i=1}^n d_{i,t}= \infty \quad & \forall t\in \mathcal{T},\\
            \max_{b\in N}\left( d_{b,t} + \sup_{n\geq 1}\sum_{i\in \mathcal{N}(b)} d_{i,t}\right) <\infty \quad & \forall t\in \mathcal{T}.
        \end{dcases}
    \end{align}
    Then $\lim_{n\to \infty}\delta=0$. That is, the performance of introducing the battery will decrease to zero as the power grid scales to an infinitely large.
\end{theorem}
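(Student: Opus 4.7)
The plan is to show the numerator $\text{opt}^0 - \text{opt}(b)$ is uniformly bounded in $n$ (and in $b$) while the denominator $\text{opt}^0$ diverges, so the ratio $\delta$ vanishes.

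The first step is to reduce the savings to a simple expression in the battery profile. Since the DC admittance matrix has zero row sums, $\mathbbm{1}^\prime B = 0$, so summing the power-balance equality in \eqref{P} across buses yields $\mathbbm{1}^\prime g_t = \mathbbm{1}^\prime d_t - u_{b,t}$. Multiplying by $c(t)$ and summing over $t$ gives
\[
\text{opt}^0 - \text{opt}(b) \;=\; \sum_{t=1}^T c(t)\, u_{b,t}^\ast,
\]
where $u_{b,t}^\ast$ denotes the optimal profile. Invoking the cyclic constraint $\sum_t u_{b,t}^\ast = 0$ from \eqref{P}, this equals $\sum_{t=1}^T (c(t) - c_{\min})\, u_{b,t}^\ast$ with $c_{\min} \coloneqq \min_t c(t)$; since the charging terms ($u_{b,t}^\ast < 0$) contribute non-positively, I drop them to obtain
\[
\text{opt}^0 - \text{opt}(b) \;\leq\; (c_{\max} - c_{\min})\sum_{t=1}^T (u_{b,t}^\ast)^+.
\]

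The second step controls each per-period discharge by local demand. Theorem~\ref{optimal charging profile} gives $u_{b,t}^\ast \leq F_{b,t}^{\text{out}} + d_{b,t}$. Every unit of net injection at $b$ must exit along an incident line, so $F_{b,t}^{\text{out}} \leq \sum_{i \in \mathcal{N}(b)} f_{bi}$, and Assumption~\ref{Congestion assumption} yields $f_{bi} < d_{i,t}$. Combining,
\[
u_{b,t}^\ast \;\leq\; d_{b,t} + \sum_{i \in \mathcal{N}(b)} d_{i,t},
\]
which, by the second condition of \eqref{cond_for_performance_bound}, is uniformly bounded over $b$ and $n$ for each $t$. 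Summing over the finite horizon $T$ furnishes a constant $C$ independent of $n$ and $b$ such that $\text{opt}^0 - \text{opt}(b) \leq C$ for every $b$, hence $\max_b (\text{opt}^0 - \text{opt}(b)) \leq C$. The first condition of \eqref{cond_for_performance_bound} then gives $\text{opt}^0 = \sum_t c(t)\sum_i d_{i,t} \to \infty$, so $\delta \leq C/\text{opt}^0 \to 0$.

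The main obstacle is that the savings formula in Theorem~\ref{thm2} mixes per-period charging terms $F_{b,t}^{\text{in}} + \tilde g_{b,t} - d_{b,t}$, which can grow with $n$ (since $\tilde g_{b,t}$ is unconstrained and $F_{b,t}^{\text{in}}$ may be a global maxflow), so a direct term-by-term bound through that formula fails. The key trick above is to use the cyclic constraint to convert total charging into total discharging, because the latter is what is governed by the lines incident to $b$ and hence, via Assumption~\ref{Congestion assumption}, by a purely local and uniformly bounded quantity.
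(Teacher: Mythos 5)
Your proof is correct, and it reaches the same decisive estimate as the paper --- namely that the per-period savings are controlled by $F_{b,t}^{\text{out}}+d_{b,t}\leq d_{b,t}+\sum_{i\in\mathcal{N}(b)}f_{bi}\leq d_{b,t}+\sum_{i\in\mathcal{N}(b)}d_{i,t}$ via Assumption~\ref{Congestion assumption}, after which the two conditions in \eqref{cond_for_performance_bound} finish the job identically --- but you arrive at that estimate by a different and arguably more elementary route. The paper relaxes \eqref{P} by deleting the generation upper bound $g_t\leq\tilde g_t$, observes that $\text{opt}^{\infty}(b)\leq\text{opt}(b)$, and then invokes the lexicographic characterization of Theorem~\ref{thm2} (with $k=1$ in the relaxed problem) to get the closed form $\text{opt}^0-\text{opt}^{\infty}(b)=\sum_t(F_{b,t}^{\text{out}}+d_{b,t})(c(t)-c(t^{(1)}))$. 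You instead sum the power-balance constraint over buses (using $\mathbbm{1}'B=0$) to write the savings exactly as $\sum_t c(t)u^*_{b,t}$, recenter with the cyclic constraint $\sum_t u^*_{b,t}=0$, and discard the charging terms; the bound $u^*_{b,t}\leq F_{b,t}^{\text{out}}+d_{b,t}$ you then use is a pure feasibility fact (stated in the paper right after Lemma~\ref{positive maxflow}), so you never need Theorem~\ref{optimal charging profile}, Theorem~\ref{thm2}, or the $\tilde g\to\infty$ relaxation. What your route buys is self-containedness and robustness: it only uses feasibility of the optimal solution and the structure of the constraints, whereas the paper's route leans on the correctness of the bang-bang/adjustment-period machinery. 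What it costs is a slightly looser constant ($(c_{\max}-c_{\min})\sum_t(u^*_{b,t})^+$ versus the paper's exact weighting by $c(t)-c(t^{(1)})$ per period), which is immaterial for the limit. Your closing diagnosis of why a naive term-by-term bound on \eqref{general optimal cost} would fail (the charging terms involve $\tilde g_{b,t}$ and $F^{\text{in}}_{b,t}$, which are not locally controlled) correctly identifies the obstruction that both proofs must route around.
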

\begin{proof}
Note that if we relax \eqref{P} by removing the constraint $g_{t}\leq \tilde{g}_{t}, \forall t\in \mathcal{T}$. The cost, denoted as $\text{opt}^{\infty}(b)$ will not be greater than the original cost $\text{opt}(b)$. The superscipt $\infty$ means that the relaxed problem is equal to setting $\tilde{g}_{i,t}\to \infty, \forall i,t$. Theorem \ref{thm2} also applies. Hence $k=1$ and it is easy to identify:
\begin{align*}
    \text{opt}^{0} - \text{opt}^{\infty}(b) &= \sum_{t=1}^T (F_{b,t}^{\text{out}}+d_{b,t})(c(t)-c(t^{(1)}))\\
    &\overset{(*)}{\leq} \sum_{t=1}^T \left(\sum_{i\in \mathcal{B}(b)}d_{i,t}\right)(c(t)-c(t^{(1)})).
\end{align*}
$(*)$ comes from Assumption \ref{Congestion assumption}, and $\mathcal{B}(b)\coloneqq \mathcal{N}(b)\cup\{b\}$. So, the performance of introducing a battery is upper-bounded.
\begin{align*}
    \delta &= \max_{b\in N}\frac{\text{opt}^{0}-\text{opt}(b)}{\text{opt}^{0}}\leq \max_{b\in N}\frac{\text{opt}^{0}-\text{opt}^{\infty}(b)}{\text{opt}^{0}}\\
    &\leq \frac{\max_{b\in N} \sum_{t=1}^T \left(\sum_{i\in \mathcal{B}(b)}d_{i,t}\right)(c(t)-c(t^{(1)}))}{\sum_{t=1}^T\left(\sum_{i=1}^n d_{i,t}\right)c(t)},
\end{align*}
where the first inequality comes from $$\min_{b\in N}\text{opt}^{\infty}(b)\leq \min_{b\in N}\text{opt}(b) = \text{opt}(b^*).$$
By the conditions, $\lim_{n\to\infty}\sum_{i=1}^n d_{i,t}= \infty$, i.e., the denominator goes to infinity. Moreover, the numerator is finite:
\begin{align*}
    &\max_{b\in N} \sum_{t=1}^T \left(\sum_{i\in \mathcal{B}(b)}d_{i,t}\right)(c(t)-c(t^{(1)}))\\
    \leq & \max_{b\in N} \sum_{t=1}^T \left(d_{b,t}+\sup_{n\geq 1}\sum_{i\in \mathcal{N}(b)}d_{i,t}\right)(c(t)-c(t^{(1)})) <\infty\\
    \Longrightarrow & \sup_{n\geq 1}  \left\{\max_{b\in N} \sum_{t=1}^T \left(\sum_{i\in \mathcal{B}(b)}d_{i,t}\right)(c(t)-c(t^{(1)}))\right\} < \infty.
\end{align*}
This proves our conclusion.
\end{proof}

One sufficient condition for \eqref{cond_for_performance_bound} is that all the demand is lower bounded and that the number of neighbors of any bus is uniform bounded. The intuition behind the theorem is that, with only one battery and limited line capacities, power regulation is restricted to a local range. As a result, the performance of introducing a single battery deteriorates as the power grid expands.

\section{Optimal battery placement algorithm \\ for weakly-cyclic graphs}\label{algorithm}

We now consider the weakly-cyclic topology, a more general class of graphs than the tree topology, where no edge belongs to more than one cycle \cite{wei2023supply}. This includes trees and rings as special cases. For analytical tractability, we assume homogeneous line capacity $f$ and unit admittance throughout this section. Following the approach used in Lemma~\ref{Maximum outflow in tree topology}, we first characterize the maximum outflow under this topology.
\begin{lemma}\label{Maximum outflow in weakly-cyclic topology with homogeneous line capacity}
Under Assumptions \ref{Congestion assumption}, \ref{unit admittance} and \ref{uniform line capa}, in a weakly-cyclic topology, for any time $t$ and battery location $b$, we have $F_{b,t}^{\text{out}} = |\mathcal{N}(b)| f$.
\end{lemma}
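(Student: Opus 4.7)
The plan is to establish $F_{b,t}^{\text{out}} = |\mathcal{N}(b)|f$ by matching upper and lower bounds; both directions admit short direct arguments and no intricate case analysis is needed.

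For the upper bound, I would sum the primal feasibility equation $g_t + u_t - d_t = B\theta_t$ over all buses $i \neq b$. Using $u_{i,t} = 0$ for $i \neq b$ together with $\mathbbm{1}^\prime B = 0$ (which follows from $A\mathbbm{1} = 0$ and $B = A^\prime A$ under Assumption \ref{unit admittance}), this yields
\[
\sum_{i \neq b}(d_{i,t} - g_{i,t}) \;=\; (B\theta_t)_b \;=\; \sum_{j \in \mathcal{N}(b)}(\theta_{b,t} - \theta_{j,t}).
\]
Each summand is a single line flow, bounded by $f$ in magnitude via $-f \leq A\theta_t \leq f$ and Assumption \ref{uniform line capa}, so $F_{b,t}^{\text{out}} \leq |\mathcal{N}(b)|f$.

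For the lower bound I would exhibit an explicit feasible triple $(g_t, u_t, \theta_t) \in W$ attaining this value. Set $\theta_{b,t} = 0$ and $\theta_{i,t} = -f$ for every $i \neq b$. Then every battery-incident line carries flow exactly $f$, saturating its capacity, while every edge between two non-battery buses carries zero flow; the line-flow constraints therefore hold, tight on the $\mathcal{N}(b)$-incident lines and slack elsewhere. At each neighbor $j \in \mathcal{N}(b)$ the flow-balance equation forces $g_{j,t} = d_{j,t} - f$, which lies in $(0, \tilde{g}_{j,t})$: positivity follows from Assumption \ref{Congestion assumption} via $d_{j,t} > f$, and the upper bound from Assumption \ref{Sufficient generation assumption}. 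At any remaining non-battery bus $i$ the balance reduces to $g_{i,t} = d_{i,t}$, again feasible by Assumption \ref{Sufficient generation assumption}. I close the balance at $b$ by taking $g_{b,t} = 0$ and $u_{b,t} = d_{b,t} + |\mathcal{N}(b)|f$, which is allowed since $u$ is otherwise unconstrained in $W$. Evaluating the objective on this triple gives $\sum_{j \in \mathcal{N}(b)} f = |\mathcal{N}(b)|f$, matching the upper bound.

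The only delicate check is the positivity of the neighbor generations $g_{j,t} = d_{j,t} - f$, which is precisely where Assumption \ref{Congestion assumption} is invoked; every other verification is a routine identity. I note that the weakly-cyclic hypothesis does not enter the argument—the same construction works on any topology satisfying Assumptions \ref{unit admittance} and \ref{uniform line capa}—so the lemma could in principle be stated more generally. It is natural to phrase it here because the algorithmic developments in the rest of the section exploit weak cyclicity rather than this particular identity.
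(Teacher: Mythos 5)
Your proof is correct and follows essentially the same route as the paper: an upper bound from summing the nodal balance equations over $i \neq b$ (equivalently, a cut argument bounding the net flow into $b$ by the total capacity of its incident lines), matched by the explicit construction $\theta_{b,t}=0$, $\theta_{i,t}=-f$ for $i\neq b$, which is exactly the uniform-phase-angle configuration the paper uses under homogeneous capacity and unit admittance. Your side remark is also accurate: this particular identity does not use weak cyclicity, which only becomes essential for the inflow computation and the heterogeneous-capacity tree case (Lemma \ref{Maximum outflow in tree topology}).
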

We now turn to the more challenging task of computing the maximum inflow \eqref{MIF}. A key ingredient is the notion of connected components formed by removing a single node from the network.
We take $\mathcal{C}_{v}$ as the set of connected components after removing node $v$.

\begin{lemma}\label{connected edge 1 or 2}
Let $\mathcal{C}$ be one of the connected components in $\mathcal{C}_{i}$, and $\mathcal{E}$ denote the set of edges in the original graph that connect the bus $i$ to $\mathcal{C}$. In a weakly-cyclic graph, we claim that $ |\mathcal{E}| = 1 $ (tree-like link) or $ |\mathcal{E}| = 2 $ (ring-like link).
\end{lemma}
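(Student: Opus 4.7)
My plan is to prove $|\mathcal{E}|\le 2$ by contradiction: assume $|\mathcal{E}|\ge 3$, and exhibit two distinct simple cycles sharing a common edge, which directly violates the defining property of a weakly-cyclic graph. The companion lower bound $|\mathcal{E}|\ge 1$ is immediate, since $\mathcal{C}$ is a connected component of the graph obtained by deleting $i$ and must therefore be reachable from $i$ in the original (connected) graph $G$.

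Assuming $|\mathcal{E}|\ge 3$, I would first pick three distinct edges $(i,v_1),(i,v_2),(i,v_3)\in\mathcal{E}$; simplicity of $G$ forces $v_1,v_2,v_3$ to be three distinct vertices of $\mathcal{C}$. Because $\mathcal{C}$ is connected, there exist simple paths $P_{12}$ from $v_1$ to $v_2$ and $P_{13}$ from $v_1$ to $v_3$ lying entirely inside $\mathcal{C}$. Concatenating these with the appropriate edges incident to $i$ produces two closed walks
\begin{align*}
    C_{12}:\ & i \to v_1 \to \cdots \to v_2 \to i \quad \text{(following } P_{12}\text{)},\\
    C_{13}:\ & i \to v_1 \to \cdots \to v_3 \to i \quad \text{(following } P_{13}\text{)}.
\end{align*}
Since $i\notin\mathcal{C}$, neither $P_{12}$ nor $P_{13}$ passes through $i$; together with the simplicity of each path and the distinctness of $v_1,v_2,v_3$, this shows that $C_{12}$ and $C_{13}$ visit $i$ exactly twice and no other vertex more than once, so both are genuine simple cycles of $G$.

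To close the argument, I would note that $C_{12}$ and $C_{13}$ are distinct: the edges of $C_{12}$ incident to $i$ are exactly $\{(i,v_1),(i,v_2)\}$ (since $P_{12}\subset\mathcal{C}$ contains no edge touching $i$), while those of $C_{13}$ are $\{(i,v_1),(i,v_3)\}$, and $(i,v_2)\ne(i,v_3)$. Yet both cycles share the edge $(i,v_1)$, so this single edge lies on two different cycles of $G$, contradicting the weakly-cyclic hypothesis. The only genuine bookkeeping step — and really the only place one has to be careful — is verifying that the closed walks above are truly simple cycles rather than walks with hidden repetitions; this is handled cleanly by the removal of $i$ from $\mathcal{C}$ together with the distinctness of $v_1,v_2,v_3$, and I do not foresee any deeper obstacle.
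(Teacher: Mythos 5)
Your proof is correct and takes the standard (and essentially the only natural) route: connectivity of $G$ gives $|\mathcal{E}|\ge 1$, and three edges from $i$ into one component of $G-i$ would yield two distinct simple cycles through the shared edge $(i,v_1)$, contradicting the weakly-cyclic property. The verification that the closed walks are genuine simple cycles (via $i\notin\mathcal{C}$ and the simplicity of the paths) is handled cleanly, so nothing is missing.
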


This property enables an efficient recursive procedure (see Algorithm 1). When $|\mathcal{E}|$ exceeds 2, the coupling described by Kirchhoff’s second law becomes more complex. In general topology, analyzing this procedure is challenging, since $|\mathcal{E}|$ can be as large as $n-1$. We leave this for future works. However, with either $|\mathcal{E}|=1$ or $2$, the analysis becomes easy.

Algorithm 1 computes the maximum inflow at each bus by traversing its associated connected components and recursively evaluating their contributions.

\begin{algorithm}[h]
\caption{Weakly-Cyclic Topology Maximum Inflow}
\begin{algorithmic}[1]
\State \textbf{Input:} graph $G = (N, E)$, power grid parameters.
\State \textbf{Output:} $F_{t}^{\text{in}}$: Maximum inflow for each bus.
\State Initialize $F_{t}^{\text{re-in}}$, $F_{t}^{\text{in}}$ as empty nested dictionaries 
\For{each $v \in N$} 
    \State Initialize $F_{t}^{\text{re-in}}[v][\mathcal{C}] \gets -1$ for each $\mathcal{C} \in \mathcal{C}_{v}$ 
\EndFor
\For{each $v \in N$} 
    \For{each $\mathcal{C} \in \mathcal{C}_{v}$} 
        \State $F_{t}^{\text{re-in}}[v][\mathcal{C}] \gets$ \Call{MIF}{$v$, $\mathcal{C}$} 
    \EndFor
    \State $F_{t}^{\text{in}}[v] \gets \sum_{\mathcal{C}\in \mathcal{C}_v} F_{t}^{\text{re-in}}[v][\mathcal{C}]$ 
\EndFor

\end{algorithmic}
\end{algorithm}

The algorithm begins by identifying the connected component for each bus $ v \in N $ and initializing $ F_{t}^{\text{re-in}}[v][\mathcal{C}] $, which represents the maximum power transmission from a connected component $\mathcal{C}$ to bus $ v $. This value is subsequently calculated using the MIF function, followed by an aggregation across all $\mathcal{C} \in \mathcal{C}_v$ to determine the maximum inflow (lines 7–12). 

In the MIF function, we distinguish two cases based on Lemma~\ref{connected edge 1 or 2}. If $|\mathcal{E}| = 1$, to calculate the $MIF(v, \mathcal{C})$, we just need to find the single bus $p$ in $\mathcal{C}$ connected to $v$ and calculate the $MIF(p, \mathcal{C}')$, where $\mathcal{C}'\in \mathcal{C}_p, v\notin \mathcal{C}'$. This leads to a tree-wise recursion. If $|\mathcal{E}| = 2$, we treat the subgraph as a ring and apply a subroutine. The detail of the function MIF and correctness of the algorithm is formally established in the appendix.

\begin{remark}\label{specialtree}
    We remind the reader that the assumptions unit admittance and homogeneous line capacity (Assumptions \ref{unit admittance} and \ref{uniform line capa}) are used only in the case $|\mathcal{E}|=2$ in the MIF function. For a tree, Assumptions \ref{unit admittance} and \ref{uniform line capa} can be removed, since the phase angle in any connected component can be shifted arbitrarily.
\end{remark}
The algorithm computes $F_{t,b}^{\text{in}}$ for each $b\in N$ and $t\in \mathcal{T}$. Moreover, $F_{t,b}^{\text{out}}$ is already obtained by Lemma \ref{maxflow tree}. Taking them in lexicographic order and \eqref{general optimal cost} in Theorem \ref{thm2}, we can calculate the optimal cost of \eqref{P} when placing the battery on any bus.

\section{Numerical Experiment}\label{ne}
In this section, we conduct numerical experiments on different power grids to validate the effectiveness of the proposed algorithm and compare its running time to that of the Gurobi solver. We mainly study the performance of our algorithm on weakly-cyclic topology with Assumption \ref{Congestion assumption} (limited transmission) and Assumption \ref{uniform line capa} (uniform capacity). We also study the effect of Assumption \ref{unit admittance} of unit admittance by performing numerical experiements with and without this assumption. 

\begin{remark}
The Assumption \ref{uniform line capa} is hard to relax, as without it we cannot run the algorithm, which uses the homogeneous capacity $f$ in the MIF function. Therefore, we do not relax this assumption.
\end{remark}

Using the Gurobi solver, we can obtain the optimal cost both under unit and heterogeneous admittance conditions, which serves as the ground truth for cost comparison. Specifically, when the algorithm gives an optimal battery placement bus, we first verify if it matches the Gurobi solution. If consistent, the algorithm achieves optimality. Otherwise, we use Gurobi to solve the actual cost of \eqref{P} where the battery placement bus is proposed by the algorithm. This computed suboptimal cost is then compared against Gurobi's global optimum. Notably, as guaranteed by the algorithm's theoretical correctness (see appendix), this discrepancy scenario can only occur in heterogeneous admittance cases, which is confirmed by our numerical studies.

We use the following two metrics to evaluate the performance of the proposed algorithm.
\begin{itemize}
    \item[(1)] The running time of our algorithm $t_a$ against that of the solver $t_s$.
    \item[(2)] Taking $b^\prime$ as the bus proposed our algorithm. Similar to $\delta$, the metric is $\delta_{a} = (\text{opt}(b^\prime) - \text{opt}(b^*))/\text{opt}(b^*)$. In comparison, we use the relative cost difference that correlates with the ``worst" and ``mean" battery placement bus, which are $\delta_{w} = \left(\max_{i\in N}\text{opt}(i) - \text{opt}(b^*)\right)/\text{opt}(b^*)$ and $\delta_{m} = \left(\sum_{i\in N}\text{opt}(i)/N - \text{opt}(b^*)\right)/\text{opt}(b^*)$, respectively. Obviously, $0\leq \delta_{m}\leq \delta_{w}\leq \delta$. In addition, we use $(\delta_{m} - \delta_{a})^+$ to evaluate the \textit{performance increase} of our algorithm. This represents how our algorithm performs against which of randomly choose battery placement bus.
\end{itemize}
For our numerical experiments, we fix the time horizon at $T=15$. In our notation, $\mathcal{U}(l,h)$ denotes the uniform distribution with support $[l,h]$, $\text{Ber}(p)$ is the Bernoulli distribution with success probability $p$, and $\text{TN}(\mu, \sigma^2, a, b)$ represents the normal distribution with mean $\mu$ and variance $\sigma^2$ truncated to the interval $[a,b]$. 

We set $c(t) \sim \mathcal{U}(0,1)$ and $d_{i,t} \sim \mathcal{U}(1,2)$. Assuming $f=1$ to be homogeneous, we design two primary generation schemes. In Case I, the generation capacity $\tilde{g}_{i,t}$ is subject to a large variance, following $\tilde{g}_{i,t}\sim 2 + (1 - X_{i,t}) \text{TN}(0,1,0,1) + X_{i,t} (\text{TN}(0,4,-20,20) + 100)$, where $X_{i,t}\sim \text{Ber}(0.1)$. The large variance is driven by the Bernoulli distribution and the constant ``100". With 10\% probability, $X_{i,t}$ shifts the generation capacity to the high value mode. In Case II, capacity exhibits small fluctuations, given by $\tilde{g}_{i,t}\sim d_{i,t} + \mathcal{U}(0,1)$. Note that all settings satisfy our model's assumptions.


Next, we use both real and scaling power grid to evaluate Algorithm 1. As the parameters are generated randomly, we repeat the experiment 100 times for each graph.

\subsection{Numerical experiment on real power grid}
For our real-network analysis, we use topologies based on the IEEE 15-, 33-, 85-, and 123-bus test systems, modified to be weakly-cyclic. Since the 15-, 33-, and 85-bus systems are naturally tree-structured, we selectively add lines to introduce cycles. Conversely, as the IEEE 123-bus system is  not weakly-cyclic, we remove specific lines to fit the desired topology. For each system, Table \ref{tab:IEEE cases} details the modifications. We add ``M" to denote the modified bus system (see Table~\ref{tab:case_comparison_unitadd} and~\ref{tab:case_comparison_randomadd}).

\begin{table}[h]
\renewcommand\arraystretch{1.5} 
\caption{Modified IEEE bus systems for weakly-cyclic topologies.}
\label{tab:IEEE cases}
\centering
\begin{tabular}{|c|c|}
\hline
\textbf{Bus System} & \textbf{Modification} \\ \hline
IEEE 15-bus \cite{das1995simple} & \makecell[c]{(6,10), (11,14)} \\
\hline
IEEE 33-bus\cite{baran1989network}\cite{schneider2017analytic} & \makecell[c]{(1,22), (9,15), (25,29)} \\ 
\hline
IEEE 85-bus \cite{das1995simple} & \makecell[c]{(16,22), (51,71)} \\ \hline
IEEE 123-bus \cite{dolatabadi2020enhanced} & \makecell[c]{Remove (60,160)} \\ \hline
\end{tabular}
\end{table}

For each parameter setting, we evaluate Algorithm 1 under two scenarios: one with a unit admittance and another with a random admittance drawn from $\mathcal{U}(0,1)$. The results of these experiments are summarized in Tables \ref{tab:case_comparison_unitadd} and \ref{tab:case_comparison_randomadd}. Since the variance across the 100 experiments was negligible, we report only the mean values.

\begin{table}[h]
    \renewcommand\arraystretch{1.5}
    
    \begin{threeparttable}
        \caption{Test result for Algorithm 1 in real power grid with unit admittance.}
        \label{tab:case_comparison_unitadd}
        \centering
        \begin{tabular}{|c||c|c|c|c|c|c|}
            \hline
            \multicolumn{2}{|c|}{\textbf{Settings}} & $\delta_a\tnote{*}$ & $\delta_m$ & $\delta_w$ & $t_a/s$ & $t_s/s$ \\ \hline
            \multirow{2}{*}{\makecell{IEEE \\ 15-bus M}} & I & \makecell[c]{0.000\%} & \makecell[c]{10.844\%} & \makecell[c]{16.113\%} & \makecell[c]{0.003} & \makecell[c]{0.203} \\ \cline{2-7}
                & II & \makecell[c]{0.000\%} & \makecell[c]{3.779\%} & \makecell[c]{6.392\%} & \makecell[c]{0.003} & \makecell[c]{0.194} \\ \hline
            \multirow{2}{*}{\makecell{IEEE \\ 33-bus M}} & I & \makecell[c]{0.000\%} & \makecell[c]{3.500\%} & \makecell[c]{5.417\%} & \makecell[c]{0.010} & \makecell[c]{0.854} \\ \cline{2-7}
                & II & \makecell[c]{0.000\%} & \makecell[c]{1.044\%} & \makecell[c]{2.239\%} & \makecell[c]{0.010} & \makecell[c]{0.875} \\ \hline
            \multirow{2}{*}{\makecell{IEEE \\ 85-bus M}} & I & \makecell[c]{0.000\%} & \makecell[c]{1.728\%} & \makecell[c]{2.539\%} & \makecell[c]{0.040} & \makecell[c]{5.318} \\ \cline{2-7}
                & II & \makecell[c]{0.000\%} & \makecell[c]{0.571\%} & \makecell[c]{1.005\%} & \makecell[c]{0.042} & \makecell[c]{5.007} \\ \hline
            \multirow{2}{*}{\makecell{IEEE \\ 123-bus M}} & I & \makecell[c]{0.000\%} & \makecell[c]{1.439\%} & \makecell[c]{1.985\%} & \makecell[c]{0.061} & \makecell[c]{10.528} \\ \cline{2-7}
                & II & \makecell[c]{0.000\%} & \makecell[c]{0.574\%} & \makecell[c]{0.859\%} & \makecell[c]{0.065} & \makecell[c]{10.587} \\ \hline
        \end{tabular}
        \begin{tablenotes}
            \item[*] These values are zero to at least 16 decimal places.
        \end{tablenotes}
    \end{threeparttable}
\end{table}

\begin{table}[H]
    \renewcommand\arraystretch{1.5}
    
    \begin{threeparttable}
        \caption{Test result for Algorithm 1 in real power grid with random admittance.}
        \label{tab:case_comparison_randomadd}
        \centering
        \begin{tabular}{|c||c|c|c|c|c|c|}
            \hline
            \multicolumn{2}{|c|}{\textbf{Settings}} & $\delta_a$ & $\delta_m$ & $\delta_w$ & $t_a/s$ & $t_s/s$ \\ \hline
            \multirow{2}{*}{\makecell{IEEE \\ 15-bus M}} & I & \makecell[c]{0.033\%} & \makecell[c]{10.157\%} & \makecell[c]{15.355\%} & \makecell[c]{0.005} & \makecell[c]{0.326} \\ \cline{2-7}
                & II & \makecell[c]{0.154\%} & \makecell[c]{3.907\%} & \makecell[c]{6.338\%} & \makecell[c]{0.003} & \makecell[c]{0.200} \\ \hline
            \multirow{2}{*}{\makecell{IEEE \\ 33-bus M}} & I & \makecell[c]{0.031\%} & \makecell[c]{3.706\%} & \makecell[c]{5.680\%} & \makecell[c]{0.015} & \makecell[c]{1.355} \\ \cline{2-7}
                & II & \makecell[c]{0.061\%} & \makecell[c]{1.078\%} & \makecell[c]{2.207\%} & \makecell[c]{0.010} & \makecell[c]{0.895} \\ \hline
            \multirow{2}{*}{\makecell{IEEE \\ 85-bus M}} & I & \makecell[c]{0.002\%} & \makecell[c]{1.702\%} & \makecell[c]{2.482\%} & \makecell[c]{0.040} & \makecell[c]{5.156} \\ \cline{2-7}
                & II & \makecell[c]{0.000\%\tnote{\dag}} & \makecell[c]{0.585\%} & \makecell[c]{1.021\%} & \makecell[c]{0.042} & \makecell[c]{5.369} \\ \hline
            \multirow{2}{*}{\makecell{IEEE \\ 123-bus M}} & I & \makecell[c]{0.000\%\tnote{*}} & \makecell[c]{1.384\%} & \makecell[c]{1.928\%} & \makecell[c]{0.061} & \makecell[c]{10.172} \\ \cline{2-7}
                & II & \makecell[c]{0.003\%} & \makecell[c]{0.588\%} & \makecell[c]{0.881\%} & \makecell[c]{0.064} & \makecell[c]{10.476} \\ \hline
        \end{tabular}
        \begin{tablenotes}
            \item[\dag] This value is zero to 3 decimal places.
            \item[*] This value are zero to at least 16 decimal places.
        \end{tablenotes}
    \end{threeparttable}
\end{table}

As indicated by our theory, $\delta_a$ is zero in all unit admittance experiments (Table \ref{tab:case_comparison_unitadd}). A key observation from both tables is that the computational advantage (on running time) of Algorithm 1 over the Gurobi solver becomes more pronounced as the number of buses in the system increases.

Furthermore, even when the unit admittance assumption is violated (Table \ref{tab:case_comparison_randomadd}), the cost increase induced by our algorithm ($\delta_a$) remains very small. We also find that neither $\delta$ nor the running time (both for the algorithm and the solver) are sensitive to whether the admittance is unit or random. Consistent with Theorem \ref{performance bound}, the gaps $\delta_w - \delta_a$ and $\delta_m - \delta_a$ shrink in the random admittance case as the power grid scales up. These gaps are notably larger in Case I, which features more dispersed generation capacities, compared to Case II. 

Next, we evaluate the algorithm's performance on tree-structured power grids, using the original IEEE 15-bus and 85-bus systems with random admittance. To validate Remark \ref{specialtree}, we test two scenarios for line capacity: one with uniform capacity and another with random capacity, where the latter is defined in Case III.

Additionally, to verify Remark~\ref{specialtree} in a tree-structured grid, we introduce Case III. This case mirrors Case I but relaxes the homogeneity assumption by drawing the line parameter from $f \sim \text{TN}(1,1,0.01,1)$.

\begin{table}[H]
    \renewcommand\arraystretch{1.5}
    
    \begin{threeparttable}
        \caption{Test result for Algorithm 1 in tree-topology power gird with random admittance.}
        \label{tab:case_comparison_tree}
        \centering
        \begin{tabular}{|c||c|c|c|c|c|c|}
            \hline
            \multicolumn{2}{|c|}{\textbf{Settings}} & $\delta_a\tnote{*}$ & $\delta_m$ & $\delta_w$ & $t_a/s$ & $t_s/s$ \\ \hline
            \multirow{3}{*}{\makecell{IEEE \\ 15-bus}} & I & \makecell[c]{0.000\%} & \makecell[c]{11.226\%} & \makecell[c]{15.848\%} & \makecell[c]{0.002} & \makecell[c]{0.180} \\ \cline{2-7}
                & II & \makecell[c]{0.000\%} & \makecell[c]{4.524\%} & \makecell[c]{6.461\%} & \makecell[c]{0.002} & \makecell[c]{0.180} \\ \cline{2-7}
                & III & \makecell[c]{0.000\%} & \makecell[c]{7.015\%} & \makecell[c]{10.523\%} & \makecell[c]{0.003} & \makecell[c]{0.178} \\ \hline
            \multirow{3}{*}{\makecell{IEEE \\ 85-bus}} & I & \makecell[c]{0.000\%} & \makecell[c]{1.740\%} & \makecell[c]{2.522\%} & \makecell[c]{0.023} & \makecell[c]{4.989} \\ \cline{2-7}
                & II & \makecell[c]{0.000\%} & \makecell[c]{0.575\%} & \makecell[c]{0.969\%} & \makecell[c]{0.024} & \makecell[c]{5.054} \\ \cline{2-7}
                & III & \makecell[c]{0.000\%} & \makecell[c]{1.433\%} & \makecell[c]{2.097\%} & \makecell[c]{0.023} & \makecell[c]{4.985} \\ \hline
        \end{tabular}
        \begin{tablenotes}
            \item[*] These values are zero to at least 16 decimal places.
        \end{tablenotes}
    \end{threeparttable}
\end{table}
As shown in Table \ref{tab:case_comparison_tree}, similar trends hold for tree topologies. The key finding is that $\delta_a=0$ in Case III—even with random admittance and heterogeneous line capacity—which provides direct empirical support for Remark \ref{specialtree}.

Similar results were obtained following other modifications to the original IEEE systems and the implementation of other randomized parameter generation schemes, shown in appendix.

\subsection{Numerical experiment on scaling power grid}
The phenomenon that the gaps $\delta_w - \delta_a$ and $\delta_m - \delta_a$ shrink from 15- to 123-bus system motivates us to design a simulation in which the power grid scales up gradually. We construct a series of artificial weakly-cyclic power grids with triangle structures.

We ran simulations on power grids ranging in size from 15 to 55 buses. For each grid size, we test two distinct parameter generation schemes, Case I and Case II. Each of these experimental settings (a specific grid size and case) is simulated 100 times. Moreover, we observe that the solver's running time is larger than the algorithm. This motivates us to introduce a benchmark setting the solver's time limit comparable to that of our algorithm, aiming to observe an improvement in accuracy from our algorithm over the solver within a limited time. However, in our case, setting a maximum solving time comparable to that of the algorithm is insufficient for the solver to find the optimal solution. As a result, for each experiment, we first run Algorithm 1 and then impose a solving time constraint on the solver equal to 10 times the solving time of Algorithm 1 (excluding modeling time). The corresponding cost increase is denoted as $\delta_t$. The running time and $\delta$ are depicted in Fig. \ref{fig:running-time1} and \ref{fig:running-time2}.

\begin{figure}[h]
    \centering
    \includegraphics[width=1\linewidth]{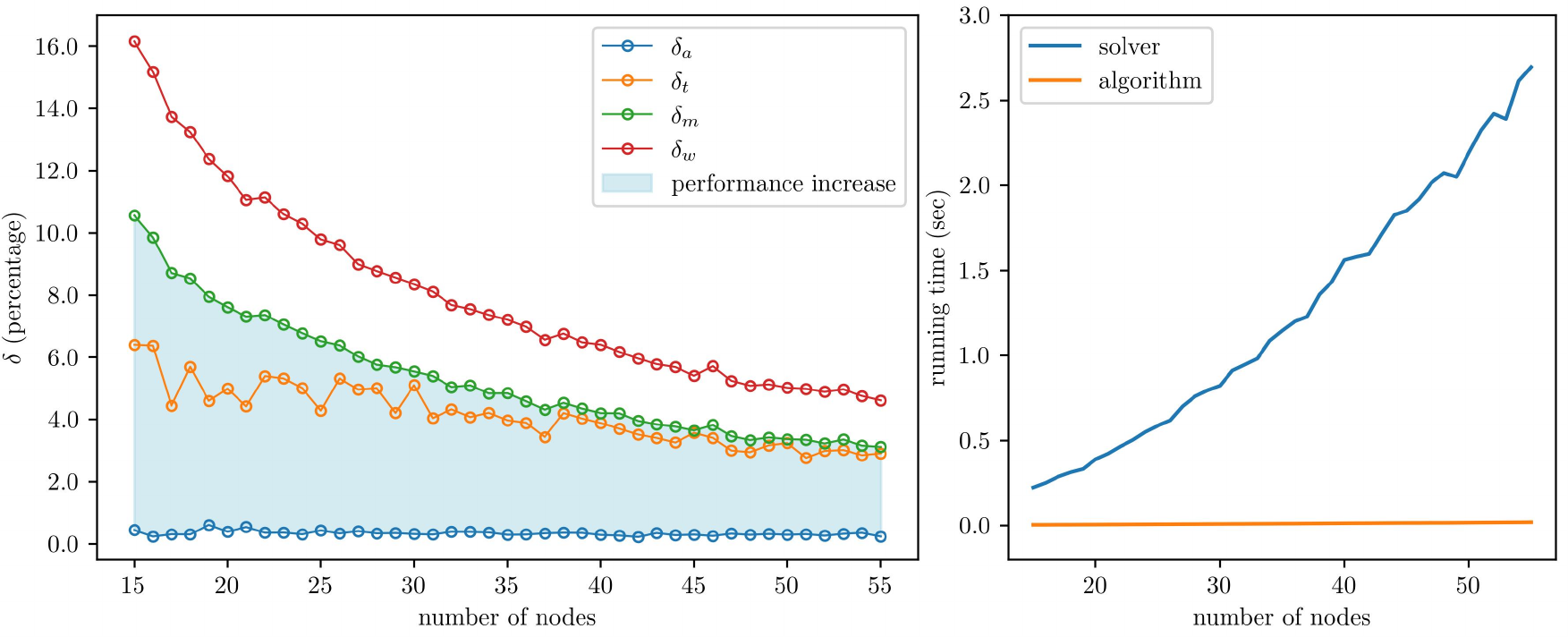}
    \caption{$\delta$s and running time comparison: solver and algorithm (Case I).}
    \label{fig:running-time1}
\end{figure}

\begin{figure}[h]
    \centering
    \includegraphics[width=1\linewidth]{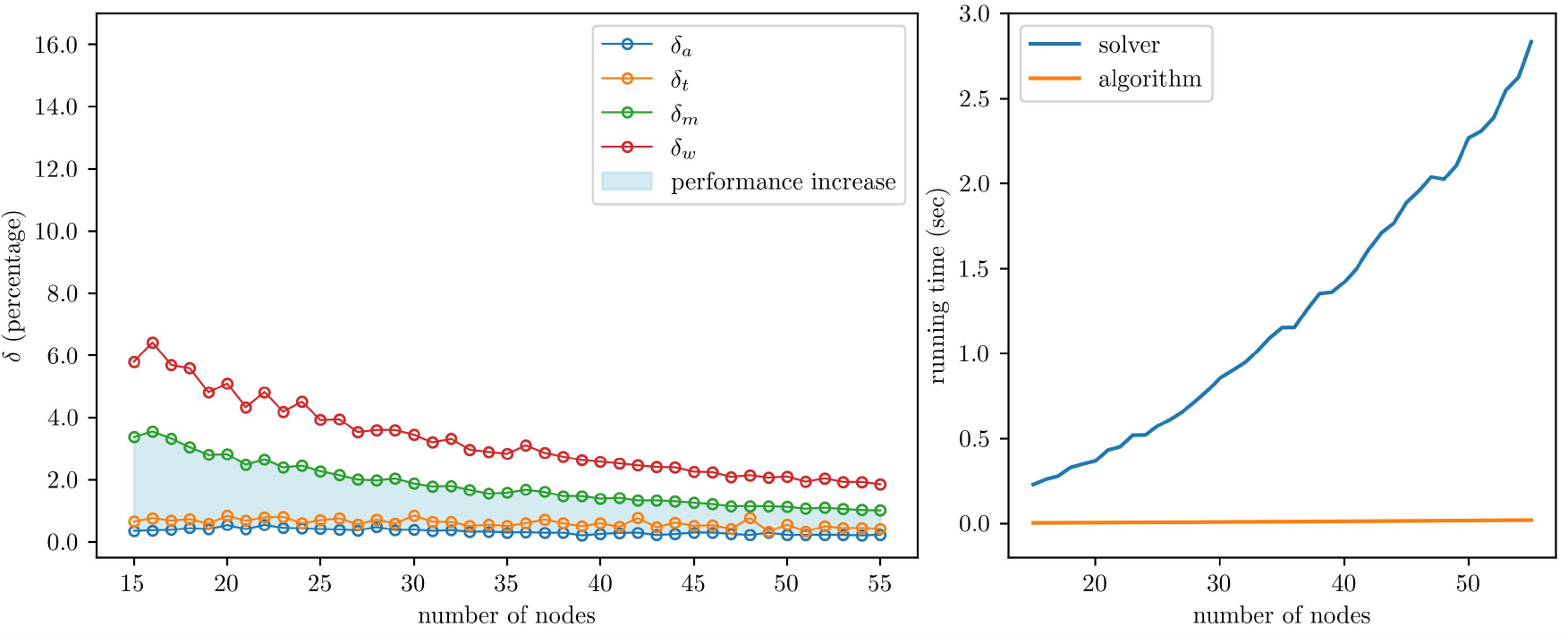}
    \caption{Loss and running time comparison: solver and algorithm (Case II).}
    \label{fig:running-time2}
\end{figure}

The preceding results lead to several key conclusions. First, our algorithm demonstrates a significant computational advantage, running approximately 100 times faster on average than the Gurobi solver. Second, the algorithm proves to be robust: when applied to weakly-cyclic graphs with heterogeneous admittance, it incurs only a small increase in cost. Finally, cost increases ($\delta$) diminishe as the network size grows and are consistently lower for systems with more centralized generation capacities (Case II) compared to those with more dispersed capacities (Case I).

\section{Conclusion}\label{conclusion}
In this study, we investigate the optimal placement of a single, infinitely large capacity battery in a power grid to minimize total generation cost, with a primary focus on finding the influence of network topology on this decision.

Our main contribution begins with the decomposition of the original MILP problem into a set of independent LP subproblems, one for each potential battery location. By introducing the Maxflow decomposition, we are able to characterize the optimal battery placement as a bang-bang control and derive an analytical expression for the operational cost.

The theoretical analysis reveals that in scenarios with sufficiently large generation capacity or in a tree network, the weighted degree is the only determinant of optimal battery placement under some parameter settings. In tree networks, the minor impact of higher-order metrics such as betweenness centrality is also discussed. Furthermore, we establish a performance bound indicating that the relative cost-saving benefit of a single battery diminishes as the network scales, underscoring the localized nature of its impact on large systems.

Building on these theoretical foundations, we propose Algorithm 1 to efficiently solve for the optimal placement in weakly-cyclic graphs. Numerical experiments conducted on modified IEEE test systems and scaled artificial grids validate our approach. The algorithm not only showcases a dramatic improvement in computational efficiency, but is also precise even when the unit admittance assumption is removed.


Future research could extend this framework in several promising directions. Generalizing our analytical results and the proposed algorithm to arbitrary network topologies remains a key challenge. Additionally, investigating the problem with multiple interacting battery units, incorporating stochastic models for demand and generation, and accounting for transmission losses would further enhance the model's practical applicability in real-world energy systems.

\bibliographystyle{IEEEtran}

\begin{thebibliography}{10}
\providecommand{\url}[1]{#1}
\csname url@samestyle\endcsname
\providecommand{\newblock}{\relax}
\providecommand{\bibinfo}[2]{#2}
\providecommand{\BIBentrySTDinterwordspacing}{\spaceskip=0pt\relax}
\providecommand{\BIBentryALTinterwordstretchfactor}{4}
\providecommand{\BIBentryALTinterwordspacing}{\spaceskip=\fontdimen2\font plus
\BIBentryALTinterwordstretchfactor\fontdimen3\font minus \fontdimen4\font\relax}
\providecommand{\BIBforeignlanguage}[2]{{%
\expandafter\ifx\csname l@#1\endcsname\relax
\typeout{** WARNING: IEEEtran.bst: No hyphenation pattern has been}%
\typeout{** loaded for the language `#1'. Using the pattern for}%
\typeout{** the default language instead.}%
\else
\language=\csname l@#1\endcsname
\fi
#2}}
\providecommand{\BIBdecl}{\relax}
\BIBdecl

\bibitem{nadeem2023distributed}
T.~B. Nadeem, M.~Siddiqui, M.~Khalid, and M.~Asif, ``Distributed energy systems: A review of classification, technologies, applications, and policies,'' \emph{Energy Strategy Reviews}, vol.~48, p. 101096, 2023.

\bibitem{reichelstein2015time}
S.~Reichelstein and A.~Sahoo, ``Time of day pricing and the levelized cost of intermittent power generation,'' \emph{Energy Economics}, vol.~48, pp. 97--108, 2015.

\bibitem{das2018overview}
C.~K. Das, O.~Bass, G.~Kothapalli, T.~S. Mahmoud, and D.~Habibi, ``Overview of energy storage systems in distribution networks: Placement, sizing, operation, and power quality,'' \emph{Renewable and Sustainable Energy Reviews}, vol.~91, pp. 1205--1230, 2018.

\bibitem{helal2018optimal}
S.~Helal, M.~H. Ahmed, M.~M. Salama, and M.~F. Shaaban, ``Optimal scheduling of battery energy storage systems in unbalanced distribution networks,'' in \emph{2018 IEEE Canadian Conference on Electrical \& Computer Engineering (CCECE)}.\hskip 1em plus 0.5em minus 0.4em\relax IEEE, 2018, pp. 1--4.

\bibitem{Tang17}
Y.~Tang and S.~H. Low, ``Optimal placement of energy storage in distribution networks,'' \emph{IEEE Transactions on Smart Grid}, vol.~8, no.~6, pp. 3094--3103, 2017.

\bibitem{Qin19}
J.~Qin, I.~Yang, and R.~Rajagopal, ``Submodularity of storage placement optimization in power networks,'' \emph{IEEE Transactions on Automatic Control}, vol.~64, no.~8, pp. 3268--3283, 2019.

\bibitem{qin2016submodularity}
------, ``Submodularity of energy storage placement in power networks,'' in \emph{2016 IEEE 55th Conference on Decision and Control (CDC)}.\hskip 1em plus 0.5em minus 0.4em\relax IEEE, 2016, pp. 686--693.

\bibitem{QinLi19}
J.~Qin, S.~Li, K.~Poolla, and P.~Varaiya, ``Distributed storage investment in power networks,'' in \emph{2019 American Control Conference (ACC)}, 2019, pp. 1579--1586.

\bibitem{sun2015distributed}
S.~Sun, J.~A. Taylor, M.~Dong, and B.~Liang, ``Distributed real-time phase balancing for power grids with energy storage,'' in \emph{2015 American Control Conference (ACC)}.\hskip 1em plus 0.5em minus 0.4em\relax IEEE, 2015, pp. 3032--3037.

\bibitem{pinthurat2023techniques}
W.~Pinthurat, B.~Hredzak, G.~Konstantinou, and J.~Fletcher, ``Techniques for compensation of unbalanced conditions in lv distribution networks with integrated renewable generation: An overview,'' \emph{Electric Power Systems Research}, vol. 214, p. 108932, 2023.

\bibitem{wogrin2014optimizing}
S.~Wogrin and D.~F. Gayme, ``Optimizing storage siting, sizing, and technology portfolios in transmission-constrained networks,'' \emph{IEEE Transactions on Power Systems}, vol.~30, no.~6, pp. 3304--3313, 2014.

\bibitem{pandvzic2014near}
H.~Pand{\v{z}}i{\'c}, Y.~Wang, T.~Qiu, Y.~Dvorkin, and D.~S. Kirschen, ``Near-optimal method for siting and sizing of distributed storage in a transmission network,'' \emph{IEEE Transactions on Power Systems}, vol.~30, no.~5, pp. 2288--2300, 2014.

\bibitem{sardi2017multiple}
J.~Sardi, N.~Mithulananthan, M.~Gallagher, and D.~Q. Hung, ``Multiple community energy storage planning in distribution networks using a cost-benefit analysis,'' \emph{Applied energy}, vol. 190, pp. 453--463, 2017.

\bibitem{ahlawat2023optimal}
A.~Ahlawat and D.~Das, ``Optimal sizing and scheduling of battery energy storage system with solar and wind dg under seasonal load variations considering uncertainties,'' \emph{Journal of Energy Storage}, vol.~74, p. 109377, 2023.

\bibitem{ali2024optimization}
A.~Ali, G.~Abbas, M.~Keerio, N.~Mugheri, S.~Memon, A.~Saand \emph{et~al.}, ``Optimization of distributed energy resources planning and battery energy storage management via large-scale multi-objective evolutionary algorithm,'' \emph{Energy}, vol. 311, p. 133463, 2024.

\bibitem{ghaffari2022optimal}
A.~Ghaffari, A.~Askarzadeh, and R.~Fadaeinedjad, ``Optimal allocation of energy storage systems, wind turbines and photovoltaic systems in distribution network considering flicker mitigation,'' \emph{Applied Energy}, vol. 319, p. 119253, 2022.

\bibitem{wu2023distributed}
O.~Q. Wu, R.~Kapuscinski, and S.~Suresh, ``On the distributed energy storage investment and operations,'' \emph{Manufacturing \& Service Operations Management}, vol.~25, no.~6, pp. 2277--2297, 2023.

\bibitem{thrampoulidis2015optimal}
C.~Thrampoulidis, S.~Bose, and B.~Hassibi, ``Optimal placement of distributed energy storage in power networks,'' \emph{IEEE Transactions on Automatic Control}, vol.~61, no.~2, pp. 416--429, 2015.

\bibitem{Bose14}
S.~Bose and E.~Bitar, ``Variability and the locational marginal value of energy storage,'' in \emph{53rd IEEE Conference on Decision and Control}, 2014, pp. 3259--3265.

\bibitem{han2024regularized}
D.~Han, N.~Jiang, S.~S. Dey, and W.~Xie, ``Regularized mip model for optimal power flow with energy storage systems and its applications,'' \emph{arXiv preprint arXiv:2402.04406}, 2024.

\bibitem{Yuanzhang}
Y.~Xiao, C.~Bandi, and E.~Wei, ``Efficiency of linear supply function bidding in electricity markets,'' in \emph{2015 49th Asilomar Conference on Signals, Systems and Computers}, 2015, pp. 689--692.

\bibitem{wongdet2023optimal}
P.~Wongdet, T.~Boonraksa, P.~Boonraksa, W.~Pinthurat, B.~Marungsri, and B.~Hredzak, ``Optimal capacity and cost analysis of battery energy storage system in standalone microgrid considering battery lifetime,'' \emph{Batteries}, vol.~9, no.~2, p.~76, 2023.

\bibitem{schweppe2013spot}
F.~C. Schweppe, M.~C. Caramanis, R.~D. Tabors, and R.~E. Bohn, \emph{Spot pricing of electricity}.\hskip 1em plus 0.5em minus 0.4em\relax Springer Science \& Business Media, 2013.

\bibitem{ahuja1988network}
R.~K. Ahuja, T.~L. Magnanti, and J.~B. Orlin, ``Network flows,'' 1988.

\bibitem{wei2023supply}
E.~Wei \emph{et~al.}, ``Supply function equilibrium in networked electricity markets,'' \emph{arXiv preprint arXiv:2308.11420}, 2023.

\bibitem{das1995simple}
D.~Kothari and A.~Kalam, ``Simple and efficient method for load flow solution of radial distribution networks,'' \emph{International Journal of Electrical Power \& Energy Systems}, vol.~17, no.~5, pp. 335--346, 1995.

\bibitem{baran1989network}
M.~E. Baran and F.~F. Wu, ``Network reconfiguration in distribution systems for loss reduction and load balancing,'' \emph{IEEE Transactions on Power delivery}, vol.~4, no.~2, pp. 1401--1407, 1989.

\bibitem{schneider2017analytic}
K.~P. Schneider, B.~Mather, B.~C. Pal, C.-W. Ten, G.~J. Shirek, H.~Zhu, J.~C. Fuller, J.~L.~R. Pereira, L.~F. Ochoa, L.~R. de~Araujo \emph{et~al.}, ``Analytic considerations and design basis for the ieee distribution test feeders,'' \emph{IEEE Transactions on power systems}, vol.~33, no.~3, pp. 3181--3188, 2017.

\bibitem{dolatabadi2020enhanced}
S.~H. Dolatabadi, M.~Ghorbanian, P.~Siano, and N.~D. Hatziargyriou, ``An enhanced ieee 33 bus benchmark test system for distribution system studies,'' \emph{IEEE Transactions on Power Systems}, vol.~36, no.~3, pp. 2565--2572, 2020.

\end{thebibliography}



\end{document}